\documentclass[12pt]{article}%
\usepackage[letterpaper, top=1in, bottom=1in, left=1in, right=1in]{geometry}
\usepackage{amsfonts, amsmath, amsthm, amssymb}
\usepackage{enumitem, graphicx, stmaryrd, xcolor, bbm}
\SetSymbolFont{stmry}{bold}{U}{stmry}{m}{n}
\usepackage{url, algorithm}
\usepackage{float}
\floatstyle{ruled}
\usepackage{dsfont}

\usepackage[pagebackref=true,colorlinks]{hyperref}
	\hypersetup{linkcolor=[rgb]{.7,0,.7}}
	\hypersetup{citecolor=[rgb]{.5,0,.5}}
	\hypersetup{urlcolor=[rgb]{.7,0,.7}}

\newtheorem{theorem}{Theorem}[section] 
\newtheorem{lemma}[theorem]{Lemma}

\newtheorem{definition}[theorem]{Definition}

\newtheorem{conjecture}[theorem]{Conjecture}
\newtheorem{corollary}[theorem]{Corollary}
\newtheorem{remark}[theorem]{Remark}						
\newtheorem{proposition}[theorem]{Proposition}

\newcommand{\abs}[1]{\left|#1\right|}		
\newcommand{\E}{\mathop{\mathbb{E}}}  		
\newcommand{\R}{\mathbb{R}}  		
\newcommand{\N}{\mathbb{N}} 			  		

\newcommand{\set}[1]{\left\{ #1 \right\}}   
\newcommand{\brac}[1]{\left( #1 \right)}    
\newcommand{\sqbrac}[1]{\left[ #1 \right]}  
\newcommand{\ind}{\mathds{1}}				

\newcommand{\Inf}{\textnormal{Inf}} 				
\newcommand{\Var}{\textnormal{Var}}				
\newcommand{\up}[1]{^{(#1)}}

\newcommand{\poly}{\mathrm{poly}}
\renewcommand{\AA}{\mathsf{AA}}
\newcommand{\pbias}{\boldsymbol{\mu}_p}
\newcommand{\qbias}{\boldsymbol{\mu}_q}
\newcommand{\halfbias}{\boldsymbol{\mu}_{1/2}}
\newcommand{\pslice}{\boldsymbol{\nu}_p}
\newcommand{\qslice}{\boldsymbol{\nu}_q}
\newcommand{\blambda}{\boldsymbol{\lambda}}

\let\oldparagraph\paragraph
\renewcommand{\paragraph}[1]{\oldparagraph{#1.}}

\title{Distributional Variants of the Aaronson--Ambainis Conjecture}
\author{Uma Girish\thanks{Department of Computer Science, University of Toronto. Email: {\tt uma.girish@utoronto.ca}} \and Kunal Mittal\thanks{Department of Computer Science, Courant Institute of Mathematical Sciences, New York University. Email: {\tt kunal.mittal@nyu.edu}. Research supported a Simons Investigator Award.} \and Barak Nehoran\thanks{Department of Computer Science, Columbia University. Supported by AFOSR award FA9550-23-1-0363, NSF awards CCF-2530159, CCF-2144219, and CCF-2329939, and the Sloan Foundation.} \and Ran Raz\thanks{Department of Computer Science, Princeton University. Email: \texttt{ranr@princeton.edu}. Research supported by a Simons Investigator Award. }}
\date{}			 					
\begin{document}

\maketitle

\begin{abstract}
A longstanding conjecture in quantum complexity theory asserts that, under the uniform input distribution, quantum query algorithms can be polynomially simulated by classical query algorithms. More precisely, the acceptance probability of any quantum query algorithm can be approximated, on average over uniformly random inputs, by a classical query algorithm, with only polynomial query overhead. The conjecture is central to understanding whether exponential quantum advantages for decision problems necessarily rely on additional structure.

We study analogues of this conjecture under other natural input distributions and prove that they are all equivalent to the original uniform-distribution conjecture. We first consider the product distribution $\mu_p$, where the input bits are independent Bernoulli variables with fixed bias $p$. We show that for every fixed $p \in (0, 1)$, quantum query algorithms under $\mu_p$ admit polynomial-overhead classical simulations if and only if the same holds under the uniform distribution. Second, we consider the distribution $\nu_p$ that is uniform over the slice of strings with Hamming weight $\lfloor pn \rfloor$. We show that for every fixed $p \in (0, 1)$, quantum query algorithms under $\nu_p$ admit polynomial-overhead classical simulations if and only if the same holds under the uniform distribution.

The Aaronson--Ambainis conjecture is a stronger statement that implies the  above-mentioned conjecture and is formulated in terms of bounded low-degree polynomials on the Boolean hypercube.
It asserts that under the uniform distribution, any such polynomial with nonnegligible variance must have an influential variable. We formulate analogues of this conjecture, where the underlying distribution is a biased product distribution or a uniform distribution over a slice, and prove that all these variants are equivalent to the original Aaronson–Ambainis conjecture.
\end{abstract}






\section{Introduction}

A fundamental challenge in quantum complexity theory is to understand how much structure is necessary for an exponential quantum advantage. 
Do exponential quantum speedups arise only for highly structured problems, such as Simon's problem~\cite{Sim97} and other period-finding problems,
or can they also arise for decision problems under seemingly unstructured input distributions? More specifically, for which input distributions can quantum query algorithms achieve an exponential advantage over classical query algorithms?

\paragraph{The Folklore Conjecture}
A longstanding conjecture, described by Aaronson and Ambainis~\cite{AA14} as folklore, asserts that, under the uniform input distribution, the acceptance probability of every quantum query algorithm can be approximated on average by a classical query algorithm with only polynomial query overhead.
More precisely, for every $T$-query quantum algorithm and every $\epsilon > 0$, there should exist a classical query algorithm making at most $\poly\brac{T, \frac1\epsilon}$ queries, whose acceptance probability differs from that of the quantum algorithm by at most $\epsilon$, on average, over uniformly random inputs. 
If true, the conjecture would rule out super-polynomial quantum query advantages for decision problems under the uniform distribution.\footnote{
    The picture is very different for search problems, where Yamakawa and Zhandry~\cite{YZ24} showed that there is a problem exhibiting exponential quantum advantage under a uniformly random oracle.
}
The conjecture is therefore central to understanding whether super-polynomial quantum speedups for decision problems necessarily rely on additional structure in the input distribution.

\paragraph{The Aaronson--Ambainis Conjecture}
The acceptance probability of a $T$-query quantum algorithm can be expressed as a bounded multilinear polynomial of degree at most $2T$ on the Boolean hypercube~\cite{BBCMdW01}.
Motivated by this representation, Aaronson and Ambainis 
stated a conjecture about bounded low-degree polynomials, known as the Aaronson--Ambainis conjecture~\cite{AA14}. Informally, their conjecture asserts that every bounded low-degree polynomial with nonnegligible variance under the uniform distribution has a variable with nonnegligible influence.

More precisely, let $f\colon\{0,1\}^n\to[0,1]$ be a multilinear polynomial of degree at most $d$. Its variance under the uniform distribution is
\[ 
    \Var[f] 
    = 
    \E_{X} \sqbrac{\brac{f(X)-\E_{Y}[f(Y)]}^2}
    \,,
\]
and the influence of the $i$\textsuperscript{th} coordinate of the input is
\[ 
    \Inf_i[f] 
    = 
    \E_{X}\sqbrac{\brac{\vphantom{\E_{Y}}f(X)-f(X^{\oplus i})}^2} 
    \,, 
\]
where $X$ and $Y$ are independent uniformly random inputs and
$X^{\oplus i}$ is obtained from $X$ by flipping its
$i$\textsuperscript{th} bit.
The Aaronson--Ambainis conjecture asserts that
\[
    \max_{i}
    \Inf_i[f] 
    \geq
    \poly
    \brac{
        \frac{
            \Var[f]
        }{
            d
        }
    }
    ,
\]
for some universal polynomial $\poly(\cdot)$ independent of the
dimension $n$.

Aaronson and Ambainis proved that their conjecture implies the folklore conjecture stated above.
Thus, the question of classically
simulating quantum query algorithms under the uniform distribution is reduced to a question about bounded low-degree polynomials over the Boolean hypercube.
Despite recent progress~\cite{DFKO07, Mon12, OZ16, LZ23, BSdW22, Bha25, BDST26, LM26}, the Aaronson--Ambainis conjecture remains a central open problem.

\paragraph{Our Results}
We investigate whether the uniform input distribution is essential to the folklore conjecture. In particular, how much can the input distribution be biased or otherwise structured before polynomial-overhead classical simulation becomes impossible and exponential quantum advantage becomes possible? 

We study two natural families of distributions:
\begin{enumerate}
    \item The $p$-biased product distribution $\mu_p$, under which the input bits are independent and each is equal to $1$ with probability $p$.
    \item The slice distribution $\nu_p$, which is uniform over the set of $n$-bit strings of Hamming weight $\lfloor pn\rfloor$.
\end{enumerate}
These families represent two different departures from uniformly random inputs: $\mu_p$ introduces a bias while preserving independence, whereas $\nu_p$ imposes a global Hamming-weight constraint and hence introduces correlations among the input bits.

Our main result is that the folklore conjecture is insensitive to these changes in the input distribution. We formulate its natural analogue under each distribution, with the approximation error measured on average over that distribution, and we prove that all of these analogues are equivalent.

\begin{theorem}[informal]
For every fixed $p,q\in(0,1)$, quantum query algorithms under the input distribution $\mu_p$ can be classically simulated with polynomial query overhead if and only if the same holds under the input distribution $\nu_q$.

Equivalently, for every fixed $p\in(0,1)$, the folklore conjecture under the uniform distribution holds if and only if its analogue holds under $\mu_p$, and, similarly, the folklore conjecture under the uniform distribution holds if and only if its analogue holds under $\nu_p$.
\end{theorem}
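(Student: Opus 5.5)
We reduce everything to the uniform distribution $\mu_{1/2}$ through a short chain of simulations. Write $\mathsf{F}(\mathcal{D})$ for the statement: every $T$-query quantum algorithm's acceptance probability can be $\epsilon$-approximated, on average over $\mathcal{D}$, by a classical algorithm making $\poly(T,1/\epsilon)$ queries. We prove the two implications $\mathsf{F}(\mu_{1/2})\Leftrightarrow\mathsf{F}(\mu_p)$ and $\mathsf{F}(\mu_p)\Leftrightarrow\mathsf{F}(\nu_p)$. Combining them gives the theorem for all fixed $p,q$. The basic tool is a query-preserving (up to constant or polylog factors) \emph{sampling reduction}. Suppose an input $x\sim\mathcal{D}$ can be written as $x=g(y,r)$, where $y\sim\mathcal{D}'$, $r$ is independent randomness, and each bit of $x$ depends on $O(k)$ bits of $y$. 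Then a $T$-query quantum algorithm $A$ on $\mathcal{D}$ becomes an $O(kT)$-query algorithm $A_r$ on $\mathcal{D}'$; the oracle for $x$ is implemented by computing and uncomputing. Next, classically simulate the average $\E_r A_r$ under $\mathcal{D}'$, or simulate $A_r$ for random $r$ and average, using Jensen. Finally, translate the classical algorithm back, with each $y$-query answered by $O(1)$ $x$-queries or free randomness.

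For $\mu_{1/2}\to\mu_p$: a $p$-biased bit can be sampled from $O(\log(1/\epsilon'))$ uniform bits up to error $\epsilon'$, with $\epsilon'=\epsilon/(nT)$ handled by hybrid bounds. A cleaner route avoids the $n$-dependence. Use the exact mixture representation: for $p<1/2$, a $p$-biased bit equals $y_i\wedge z_i$, where $z_i\sim\mathrm{Ber}(2p)$ is independent randomness. The case $p>1/2$ is symmetric, using $\vee$. So $\mu_p$ is obtained from $\mu_{1/2}$ with one query per bit. This gives $\mathsf{F}(\mu_{1/2})\Rightarrow\mathsf{F}(\mu_p)$ with the classical algorithm querying $y_i$ through $x_i$.

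The translation back is the subtle point: the classical simulator for $\mu_{1/2}$ asks about $y_i$, but we only hold $x_i$. We therefore run the reduction in the direction where the \emph{target} distribution's bits are cheap functions of the \emph{source} bits plus private randomness. Concretely, to derive $\mathsf{F}(\mu_p)$ from $\mathsf{F}(\mu_{1/2})$, take a quantum algorithm $A$ on $x\sim\mu_p$. Build $A'$ on $y\sim\mu_{1/2}$ by feeding it $x=y\wedge z$, averaging over $z$ as a mixed oracle. Then $\E_{y} A'(y)=\E_{x\sim\mu_p}A(x)$ holds in expectation, but we need pointwise approximation in $x$. The fix is to approximate $f(x)=\Pr[A\text{ accepts }x]$ by simulating $h(y)=\E_z f(y\wedge z)$; this fails because $h$ is not pointwise equal to $f$. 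So the right tool is at the polynomial level: a bounded degree-$d$ polynomial under $\mu_p$ corresponds, via the noise or restriction operator, to a bounded degree-$d$ polynomial under $\mu_{1/2}$ with comparable variance. We then use the paper's equivalence with the Aaronson--Ambainis-type variants, namely the influence-based query algorithm of Aaronson--Ambainis, which greedily queries influential variables. Influences under $\mu_p$ and $\mu_{1/2}$ are related within $\poly(d)$ factors, for example via random restrictions, since $\mu_{1/2}$ restricted appropriately yields $\mu_p$.

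For the slice, couple $\nu_p$ with $\mu_p$. A $\mu_p$ sample conditioned on its Hamming weight is uniform on that slice, and the weight is $pn\pm O(\sqrt n)$. Conversely, starting from a slice sample, one can flip $O(\sqrt n)$ random coordinates to reach $\mu_p$. A $T$-query algorithm touches those coordinates only with probability $O(T/\sqrt n)$, so acceptance probabilities change by $o(1)$ pointwise on average, provided $n\gg\poly(T/\epsilon)$. Small $n$ is trivial, since then we can query all bits. I expect this step to be manageable. The main obstacle is the pointwise-versus-average translation in the biased-product step, which is why I would route it through influence statements and the Aaronson--Ambainis greedy simulation rather than direct oracle reductions.
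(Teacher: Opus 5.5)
\textbf{There is a genuine gap in your biased-product step.} You correctly observe that the mixture representation $x_i=y_i\wedge z_i$ fails, since simulating $h(y)=\E_z f(y\wedge z)$ does not approximate $f(x)$ pointwise. Your fallback is influence bounds plus the Aaronson--Ambainis greedy simulation. But your hypothesis is only the folklore statement $\mathsf{F}(\mu_{1/2})$, and no influence lower bound is known to follow from it. $\AA$ implies the folklore conjecture, but no converse is known. The paper accordingly proves the $\AA$-equivalences and the folklore-equivalences separately, never deriving one from the other. So your route shows at best $\AA(\halfbias)\Rightarrow\mathsf{F}(\mu_p)$, not the claimed equivalence. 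Your side claim that a fixed polynomial's influences under $\mu_p$ and $\mu_{1/2}$ agree up to $\poly(d)$ factors is also false: for $\prod_{i\le d}x_i$, $\Inf_1$ is $p^{d-1}$ versus $2^{1-d}$. The paper transfers $\AA$ across biases only by gadget composition, which raises the degree.

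\textbf{The direct reduction you abandoned works once two points are fixed.}

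First, use a \emph{deterministic} gadget $g:\{0,1\}^m\to\{0,1\}$, the indicator of a suitable prefix of $\{0,1\}^m$, with $\Pr_{z\sim\mu_{m,1/2}}[g(z)=1]=r\in[p,p+2^{-am}]$, where $a=\min(p,1-p)$. The target bits are then functions of the source bits alone, with no private randomness. So $Q(G(z))$ is a $2mT$-query quantum algorithm that agrees pointwise with $Q$ evaluated at $G(z)$.

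Second, the bias mismatch $|r-p|$ costs nothing in $n$. Under the coordinatewise coupling of $\mu_{n,p}$ and $\mu_{n,r}$, each bit flips with conditional probability $\gamma\le 2^{-am}/a$. The BBBV hybrid argument then gives $\E|Q(X)-Q(Y)|\le 4T\sqrt{\gamma}$, so $m=O(a^{-1}\log(T/(\epsilon a)))$ suffices. Your belief that one needs $\epsilon'=\epsilon/(nT)$ was the wrong turn.

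The translation back that worried you is then handled by independence. The pairs $(G(Z)_i,Z_i)$ are mutually independent across $i$. So when the classical simulator asks for a bit of block $Z_i$, you query $x_i$ and fill in $Z_i$ from its conditional law given $x_i$, then fix this randomness by averaging. This is the paper's derandomization lemma. It is used once for the gadget blocks and once more for the $r$-versus-$p$ coupling.

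Your slice step is essentially the paper's coupling-plus-hybrid argument, with one correction. For the quantum algorithm the hybrid bound is $O(T(an)^{-1/4})$, not $O(T/\sqrt n)$, because a quantum algorithm does not ``touch'' coordinates with a probability. This weaker bound still suffices for large $n$.
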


Thus, neither introducing a constant bias into each input bit nor conditioning the input on having a fixed Hamming weight changes whether quantum query algorithms admit polynomial-overhead classical simulations. 

In particular, if superpolynomial quantum query advantages for decision problems are impossible under the uniform distribution, then they are also impossible under all of these distributions.

We also establish the corresponding equivalence for the Aaronson--Ambainis conjecture. We formulate natural variants of the conjecture under biased product distributions and Hamming slices, with variance and influence defined relative to the corresponding distribution, and show that they are all equivalent.

\begin{theorem}[informal]
For every fixed $p,q\in(0,1)$, the Aaronson--Ambainis conjecture under the input distribution $\mu_p$ is equivalent to the Aaronson--Ambainis conjecture under the input distribution $\nu_q$, where variance and influence are measured with respect to the corresponding distribution.

Equivalently, for every fixed $p\in(0,1)$, the original Aaronson--Ambainis conjecture under the uniform distribution holds if and only if its analogue holds under $\mu_p$, and similarly under~$\nu_p$.
\end{theorem}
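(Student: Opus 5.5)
Since all the variants are compared against the uniform case, I will prove each variant equivalent to the original Aaronson--Ambainis conjecture under $\mu_{1/2}$; this needs reductions in both directions.

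\textbf{Uniform $\Rightarrow$ $\mu_p$.} Fix $p$ and take a bounded degree-$d$ polynomial $f$ on $\{0,1\}^n$. I will simulate a $p$-biased bit by $k$ uniform bits. Approximate $p$ by a dyadic $p' = a/2^k$ with $k = O(\log(nd/\epsilon))$, and let $g\colon\{0,1\}^k\to\{0,1\}$ be the threshold indicator, so $g$ has bias exactly $p'$. Define $F(Y) = f(g(Y^{(1)}),\dots,g(Y^{(n)}))$ on $\{0,1\}^{nk}$. Each $g$ has degree at most $k$, so $F$ has degree at most $dk$, which is polylogarithmic overhead; it stays bounded, and its uniform distribution equals $f$ under $\mu_{p'}$. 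Two points need care.
\begin{itemize}
\item The degree $dk$ depends on $\log n$, which is not allowed. To fix this I would first restrict $f$ to the variables that matter, or use a junta-type/block argument. A cleaner route is to keep $p$ exact: write $\mu_p$ as a mixture of uniform distributions via random restrictions. Choose $\rho$ so that each coordinate is fixed to $1$ with probability $\alpha$, fixed to $0$ with probability $\beta$, and left free with probability $\gamma$, where $\alpha+\gamma/2=p$. Then $\mu_p$ is the average over $\rho$ of the uniform distribution on the free coordinates, and each $f|_\rho$ is a bounded degree-$d$ polynomial.
\item Variance relates as $\E_\rho \Var_{\mathrm{unif}}[f|_\rho] \ge c_p \Var_{\mu_p}[f]$. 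I would show this via the Efron--Stein/Fourier expansion in the $p$-biased basis, where a restriction keeps each level-$S$ coefficient's mass attenuated by roughly $\gamma^{|S|}$. This costs a factor $\gamma^{d}$, which is exponential in $d$. That is a real concern, since the conjecture demands polynomial dependence.
\end{itemize}
So I expect the \emph{main obstacle} to be avoiding $\exp(d)$ losses. I would try a more careful choice: use $\gamma$ close to $1$ when $p$ is near $1/2$, and reach general $p$ by iterating a constant number of times. Alternatively, I would prove the variance transfer at the level of noise stability, using the hypercontractive tool only on the low-degree part. If that fails, I would fall back on the dyadic simulation and handle the $\log n$ by first reducing to $n\le \poly(d/\epsilon)$ effective variables, using the fact that few variables have influence above a threshold.

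\textbf{$\mu_p \Rightarrow$ uniform, and influences.} The reverse direction is symmetric: a uniform bit is a function of a few $p$-biased bits, with a von Neumann-style exact gadget on finitely many bits, or again via restrictions with $\mu_{1/2}$ written as a mixture of $\mu_p$-type distributions. In both directions I must check that an influential variable of the composed or restricted function yields an influential original variable. Influence of a block is bounded by the sum of the influences of its gadget bits, which are at most $k$ or $O(1)$ bits, so influence loses only polynomial factors. For restrictions, $\E_\rho \Inf_i[f|_\rho] \lesssim \Inf_i^{\mu_p}[f]$ up to constants, and an averaging argument picks a $\rho$ with good variance.

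\textbf{Slices.} For $\nu_q$, I would relate the slice to $\mu_q$ through the standard projection between slice and cube for low-degree functions, in the style of Filmus--Mossel: a degree-$d$ harmonic polynomial has comparable norms, variances and influences under $\nu_q$ and $\mu_q$ up to $O(d^2/n)$ errors. To make $n$ large relative to $d$, I would pad by duplicating variables or embed the slice into a larger slice. For the direction from the cube to the slice, embed the cube into a slice by pairing coordinates: $(x_i, 1-x_i)$ lies on the half slice, and the multiset of pairs $\{x_{2i-1},x_{2i}\}$ recovers the uniform bits. Influence on the slice, defined via transpositions, is then compared to cube influence.
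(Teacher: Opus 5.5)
Your product-case skeleton matches the paper's: compose $f$ with a small gadget $g$ simulating a biased bit, apply the hypothesis to $f\circ g^n$, and pull an influential gadget bit back to a coordinate of $f$. But there is a genuine gap exactly where you flagged it, and none of your fallbacks closes it. The missing idea is that the gadget's bias error never needs to be small relative to $n$. For a bounded degree-$d$ polynomial $f$, the maps $t\mapsto\Var\up{\mu_{n,t}}[f]$ and $t\mapsto\Inf_i\up{\mu_{n,t}}[f]$ are univariate polynomials of degree at most $2d$ (because $\E_{\mu_{n,t}}[x^S]=t^{|S|}$), with values in $[0,1]$ on $[0,1]$. By the Markov brothers' inequality they are therefore $8d^2$-Lipschitz in $t$, independently of $n$ (Lemma~\ref{lemma:varinf_cont}). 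So it suffices to take a gadget on $m=\lceil\frac{3D}{a}\log_2(d/v)\rceil$ bits with bias $r\in[q,q+2^{-am}]$ (Lemma~\ref{lemma:inner_fn}, valid for every $p,q$; no dyadic or exact gadget is needed). It changes the variance and every influence by at most $(v/d)^D$. The composed function has degree $dm$, polynomial in $d/v$ and free of $n$. The chain rule gives $\Inf_{(i,j)}\up{\mu_{nm,p}}[f\circ g^n]=\Inf_i\up{\mu_{n,r}}[f]\cdot\Inf_j\up{\mu_{m,p}}[g]\le\Inf_i\up{\mu_{n,r}}[f]$. This inequality (a gadget bit is never more influential than the variable it feeds) is the direction the pull-back needs. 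The bound you cite, block influence at most the sum of its bits' influences, goes the other way. Taking the constant $D$ large enough then gives a polynomial bound.

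Your alternatives fail for concrete reasons. In the restriction route, variance is not actually the obstacle. Conditioning on $\rho$ multiplies level-$S$ mass of the $p$-biased Fourier expansion by $\lambda^{|S|}$ with $1-\lambda=\gamma/(4p(1-p))$, so $\E_\rho\Var[f|_\rho]\ge\frac{\gamma}{4p(1-p)}\Var\up{\mu_{n,p}}[f]$: a constant loss, not $\gamma^d$. The influence step, however, is invalid. The hypothesis applied to $f|_\rho$ yields a coordinate $i_\rho$ that depends on $\rho$. A lower bound on $\E_\rho\max_i\Inf_i[f|_\rho]$ says nothing about $\max_i\E_\rho\Inf_i[f|_\rho]=\gamma\max_i\Inf_i\up{\mu_{n,p}}[f]$. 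Fixing one good $\rho$ is useless, since a single restriction has negligible probability. Reducing to $\poly(d/\epsilon)$ effective variables is circular: low-influence variables can jointly carry all the variance, and a $\poly(d/v)$-junta approximation would by itself imply the conjecture. A finite von Neumann gadget cannot be exact for general $p$ (for $p=1/3$ every event has probability in $3^{-m}\mathbb{Z}$), so the reverse direction needs the same continuity lemma.

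For the slice, the paper's influence is the flip influence $\E_{X\sim\nu_{n,p}}[(f(X)-f(X^{\oplus i}))^2]$, which depends on $f$ off the slice. Moreover, $f$ is an arbitrary bounded multilinear polynomial, not a harmonic one. So Filmus--Mossel invariance and transposition influences are not the right tools (swap influence is only bounded above by flip influence, Lemma~\ref{lemma:sw_inf_bound}). The paper instead shows that averages of a bounded degree-$d$ polynomial on adjacent slices differ by at most $4d^2/n$ (Markov brothers applied to $\sum_{i:x_i=1}(f(x^{\oplus i})-f(x))$, Lemma~\ref{lemma:lowdeg_two_slice}). Hence $\abs{\E_{\mu_{n,p}}[h]-\E_{\nu_{n,p}}[h]}\le 8\deg(h)^2/\sqrt{n}$ for bounded $h$, applied to $h=f$, $f^2$ and $(f(x)-f(x^{\oplus i}))^2$. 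Small $n$ is handled by the trivial bound $\max_i\Inf_i\ge\Var/(2n^2)$ of Lemma~\ref{lemma:maxinf_lb}, so no padding is needed. Finally, the pairing map $x\mapsto(x_1,1-x_1,x_2,1-x_2,\dots)$ does not transport $\nu_{2n,1/2}$ to the uniform distribution. Its image has vanishing measure in the half slice, and under $\nu_{2n,1/2}$ the odd coordinates are not independent.
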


Together, these results show that both the folklore conjecture and the Aaronson--Ambainis conjecture are robust across these natural families of input distributions.

Finally, our results imply that if the Aaronson--Ambainis conjecture is true under one of the input distributions $\mu_p$ or $\nu_p$, for {\em some} $p\in(0,1)$, then 
quantum query algorithms under each of the input distributions $\mu_p$ and $\nu_p$ can be classically simulated with polynomial query overhead, for {\em every} $p\in(0,1)$.
This follows immediately by the equivalence of all analogues of the Aaronson--Ambainis conjecture and by the equivalence of all analogues of the folklore conjecture, but a priori, it's not even clear how to prove that the Aaronson--Ambainis conjecture under  $\nu_p$ implies the folklore conjecture under the same $\nu_p$.

\subsection{Organization}

In Section~\ref{sec:prelims}, we give some definitions and preliminaries.
In Section~\ref{sec:simul_equiv}, we prove an equivalence between the distributional variants of the folklore conjecture.
In Section~\ref{sec:aa}, we prove an equivalence between the distributional  variants of the Aaronson--Ambainis conjecture, and discuss its implications to the folklore conjecture.



    

    



\section{Preliminaries}\label{sec:prelims}

Let $\N = \set{1,2,3,\dots}$ denote the set of natural numbers.
For $n\in \N$, let $[n]$ denote the set $\set{1,2,\dots,n}$.


\subsection{The Distribution Families We Consider}

A family of distributions $\blambda$ is a tuple $\blambda = (\lambda_n)_{n\in \N}$, where $\lambda_n$ is a distribution over $\set{0,1}^n$.
In this work, we shall be interested in two families of distributions:

\begin{definition}
	For $n\in \N,\ p\in (0,1)$, let $\mu_{n,p}$ be the $n$-fold product distribution on $\set{0,1}^n$, with each coordinate distributed as $\textnormal{Bernoulli}(p)$.

	For $n\in \N,\ p\in (0,1)$, let $\nu_{n,p}$ be the uniform distribution over the slice of Hamming-weight $\lfloor pn\rfloor$, given by $\set{x\in \set{0,1}^n: \sum_{i=1}^n x_i= \lfloor pn\rfloor}$.
	
	Define the families \[\pbias = (\mu_{n,p})_{n\in \N},\quad \pslice = (\nu_{n,p})_{n\in \N}.\]
\end{definition}


\subsection{Quantum and Classical Query Algorithms}

The reader is referred to~\cite[Section 1.1]{Ham25} for an introduction to classical and quantum query complexity.

For a quantum query algorithm $Q$, we use $Q(x)$ to denote its acceptance probability on input $x$.
We allow classical query algorithms (or decision trees) to output real values in $[0,1]$.
Throughout, query algorithms will be non-uniform and computationally unbounded except for the number of queries they make.\footnote{Most reductions we present can be made uniform.}

We prove some lemmas that shall be useful later:

\begin{lemma}[Hybrid Argument~\cite{BBBV97}]\label{lemma:hyb}
Let $\gamma\in [0,1]$, and let $(X,Y)$ be a coupling of two random variables on $\set{0,1}^n$ such that for every $i\in [n]$, $\Pr[Y_i \not= X_i \mid X] \leq \gamma$ almost surely.\footnote{Formally, this means that for every $i\in [n]$, and every $x\in \set{0,1}^n$ with $\Pr[X=x]>0$, it holds that $\Pr[Y_i \not= X_i \mid X] \leq \gamma$.\label{footnote:as}}
Then, 

\begin{enumerate}
	\item For every quantum query algorithm $Q:\set{0,1}^n\to [0,1]$ making at most $T$ queries, \[ \E\sqbrac{\abs{Q(X)-Q(Y)}} \leq 4T\sqrt{\gamma}.\]
	\item For every deterministic classical query algorithm $D:\set{0,1}^n\to [0,1]$ making at most $T$ queries, \[ \E\sqbrac{\abs{D(X)-D(Y)}} \leq T\gamma.\]
\end{enumerate}
\end{lemma}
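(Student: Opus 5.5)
For both parts I would condition on $X=x$ and prove the bound pointwise, then average over $x$. Fix $x$ with $\Pr[X=x]>0$, and let $Y$ be distributed according to its conditional law given $X=x$. By hypothesis, $\Pr[Y_i\neq x_i]\le\gamma$ for every $i\in[n]$. It therefore suffices to show $\E_Y\abs{Q(x)-Q(Y)}\le 4T\sqrt{\gamma}$ and $\E_Y\abs{D(x)-D(Y)}\le T\gamma$ for every fixed $x$.

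\textbf{Classical part.} Fix $x$ and consider the deterministic run of $D$ on $x$. It queries some set $S$ of at most $T$ coordinates, chosen adaptively but fully determined by $x$. If $Y$ agrees with $x$ on every coordinate in $S$, then $D$ follows the same path on $Y$ and $D(Y)=D(x)$. Otherwise $\abs{D(x)-D(Y)}\le 1$, because outputs lie in $[0,1]$. A union bound over $S$ gives
\[
\E_Y\abs{D(x)-D(Y)}\le\Pr[\exists i\in S: Y_i\neq x_i]\le T\gamma .
\]

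\textbf{Quantum part.} I would use the BBBV hybrid argument. Let $\lvert\psi_t^z\rangle$ denote the state of $Q$ just before query $t+1$ on input $z$, and let $q_i^t(x)$ be the squared amplitude of the query register on index $i$ in $\lvert\psi_t^x\rangle$, so that $\sum_i q_i^t(x)=1$. Replacing the oracle $O_x$ by $O_y$ in a single query changes the state by at most $2\sqrt{\sum_{i: x_i\neq y_i} q_i^t(x)}$ in Euclidean norm. Swapping the oracles one query at a time, and using that unitaries preserve norm, gives
\[
\bigl\|\lvert\psi_T^x\rangle-\lvert\psi_T^y\rangle\bigr\|\le 2\sum_{t=0}^{T-1}\sqrt{\sum_{i:x_i\neq y_i} q_i^t(x)} .
\]
Acceptance probabilities are squared norms of projections, so $\abs{Q(x)-Q(y)}\le 2\,\bigl\|\lvert\psi_T^x\rangle-\lvert\psi_T^y\rangle\bigr\|$. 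Now take the expectation over $Y$ and apply Jensen's inequality (concavity of the square root) to each term:
\[
\E_Y\sqrt{\sum_{i:x_i\neq Y_i} q_i^t(x)}\le\sqrt{\sum_i q_i^t(x)\Pr[Y_i\neq x_i]}\le\sqrt{\gamma}.
\]
Combining these yields $\E_Y\abs{Q(x)-Q(Y)}\le 4T\sqrt{\gamma}$.

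\textbf{Main obstacle.} The only delicate point is making the hybrid bound depend only on the query weights $q_i^t(x)$ of the fixed input $x$, and not on those of $Y$. This is what allows the expectation over $Y$ to pass inside, using only the marginal bounds $\Pr[Y_i\neq x_i]\le\gamma$ and no independence among the coordinates of $Y$. It works because each hybrid step is anchored on the $x$-run: the difference $(O_y-O_x)$ is applied to the $x$-state. The exact constants depend on the oracle convention (phase or XOR), but in either case a single mismatched oracle call gives norm difference at most $2\sqrt{\cdot}$, so the constant $4$ is achievable.
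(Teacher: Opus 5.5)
Your proposal is correct and follows the paper's proof essentially step for step. The paper also bounds $|Q(x)-Q(y)|\le 4\sum_{t}\sqrt{\sum_{i:x_i\neq y_i}q_{t,i}(x)}$ using only the $x$-run's query weights: it cites the hybrid bound from Hamoudi where you rederive it, and it gets the factor $2$ from converting state distance to acceptance probability. It then applies Jensen's inequality over $Y$ given $X$, and it handles the classical part with the same union bound over the coordinates queried on the $x$-path.
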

\begin{proof}
\begin{enumerate}
	\item Without loss of generality, we may assume that the quantum algorithm $Q$ performs no intermediate measurements, and at the end outputs the result of measuring the first qubit. For any $x\in \set{0,1}^n$, let $q_{t,i}(x)$ be the query weight of index $i\in [n]$ immediately before the $t$\textsuperscript{th} query: this is the probability that measuring that query register outputs $i$; these satisfy $\sum_{i\in [n]}q_{t,i}(x)=1$.
	By the hybrid argument~\cite[Theorem 10]{Ham25}, followed by the fact that the difference between measurement outcome probabilities for two states is at most twice the Euclidean distance between the states, we have that for any $y\in \set{0,1}^n$, 
	\[ \abs{Q(x)-Q(y)} \leq 4\cdot \sum_{t\in [T]} \sqrt{\sum_{i\in [n]:x_i\not=y_i}q_{t,i}(x)}.\]
	Taking expectation over $X,Y$, and using Jensen's inequality, we get
	\begin{align*}
		\E_{X,Y}\sqbrac{\abs{Q(X)-Q(Y)}} &\leq \E_{X,Y}\sqbrac{4\cdot \sum_{t\in [T]} \sqrt{\sum_{i\in [n]}\ind\sqbrac{X_i\not=Y_i}\cdot q_{t,i}(X)}}
		\\&\leq 4\cdot \E_{X}\sqbrac{\sum_{t\in [T]} \sqrt{\sum_{i\in [n]}\Pr_Y\sqbrac{X_i\not=Y_i \mid X}\cdot q_{t,i}(X)}}
		\\&\leq  4\cdot \E_{X}\sqbrac{\sum_{t\in [T]} \sqrt{\gamma}} \leq 4\sqrt{\gamma}\cdot T.
	\end{align*}
	
	\item For any $x\in \set{0,1}^n$, let $S_x\subseteq [n]$ denote the coordinates queried by $D$ on the computational path on $x$. Since $|S_x|\leq T$, by the union bound, we have \[ \E_{X,Y}\sqbrac{\abs{D(X)-D(Y)}} \leq \E_X\sqbrac{\Pr_Y\sqbrac{Y_i\not= X_i \text{ for some } i\in S_X \mid X}} \leq T\gamma. \qedhere\]
\end{enumerate}
\end{proof}

\begin{lemma}[Derandomizing Classical Query Algorithms]\label{lemma:derand_dt}
Let $m, n \in \N$, and let $X = (X_1,\dots,X_n)\in \set{0,1}^n$ and $Z = (Z_{1},\dots,Z_{n}) \in (\set{0,1}^m)^n$ be random variables such that the pairs $(X_1,Z_1), (X_2,Z_2), \dots, (X_n, Z_n)$ are mutually independent.

Let $f:\set{0,1}^n\to [0,1]$ be a function, and let $D:\set{0,1}^{mn}\to [0,1]$ be a deterministic $T$-query classical query algorithm.
Then, there exists a deterministic $T$-query classical query algorithm $A:\set{0,1}^n\to [0,1]$ such that
\[ \E_X\sqbrac{\abs{A(X)-f(X)}} \leq \E_{X,Z}\sqbrac{\abs{D(Z)-f(X)}}. \]
\end{lemma}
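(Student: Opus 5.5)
The plan is to first build a \emph{randomized} $T$-query algorithm $A'$ on $X$ that simulates $D$ on a sample of $Z$ drawn from the correct conditional distribution. Then we fix its randomness by averaging.

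\textbf{Step 1: a randomized simulation.} Since the pairs $(X_i,Z_i)$ are mutually independent, we can sample $Z$ given $X$ one block at a time: $Z_i$ depends only on $X_i$ and is independent of everything else. The randomized algorithm $A'$ on input $x$ runs $D$ as follows.
\begin{itemize}
\item Whenever $D$ wants to query a bit of block $Z_i$ and block $i$ has not yet been sampled, $A'$ queries $x_i$ (one query).
\item It then samples the whole block $Z_i\in\set{0,1}^m$ from the conditional law of $Z_i$ given $X_i=x_i$, using fresh private randomness.
\item Any further queries by $D$ into block $i$ are answered from this stored sample, at no extra cost.
\item At the end, $A'$ outputs whatever $D$ outputs.
\end{itemize}
Each query of $D$ causes at most one query to $x$, so $A'$ makes at most $T$ queries. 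If some $x_i$ has zero probability, we can pick any fixed distribution for $Z_i$; this does not matter on the support of $X$.

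\textbf{Step 2: correctness of the coupling.} We need the following: if $X$ is drawn from its law and all blocks (including unqueried ones) are conceptually sampled as $\tilde Z_i\sim (Z_i\mid X_i)$ independently, then $(X,\tilde Z)$ has the same joint law as $(X,Z)$. This is where mutual independence of the pairs is used, since then $\Pr[X=x,Z=z]=\prod_i \Pr[X_i=x_i]\Pr[Z_i=z_i\mid X_i=x_i]$. The run of $A'$ on $x$ is exactly $D(\tilde Z)$, because $D$ only ever looks at sampled blocks, and lazily sampling them gives the same output distribution as sampling everything up front. Hence
\[\E_{X,r}\sqbrac{\abs{A'_r(X)-f(X)}}=\E_{X,Z}\sqbrac{\abs{D(Z)-f(X)}}.\]

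\textbf{Step 3: derandomize.} For any fixed random string $r$ (which can be taken as the collection of pre-sampled answers), $A'_r$ is a deterministic $T$-query algorithm. By averaging, there is some $r$ with $\E_X\sqbrac{\abs{A'_r(X)-f(X)}}$ at most the average over $r$, and we set $A=A'_r$. Note that $f$ depends only on $X$, so $\abs{\cdot - f(X)}$ is well-defined pointwise.

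The only delicate point is Step 2: checking that the adaptive, lazily sampled simulation reproduces the joint distribution. The cleanest way is to sample all $\tilde Z_i$ in advance from $r$, so that $A'_r$'s output equals $D(\tilde Z)$ identically, and then apply the product formula above.
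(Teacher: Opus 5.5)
Your proposal is correct and is essentially the paper's proof: the paper hardwires, for each coordinate $i$ and bit $b$, an independent sample $w_{i,b}$ from the law of $Z_i$ given $X_i=b$, runs $D$ on $(w_{1,x_1},\dots,w_{n,x_n})$ while querying $x_i$ only on $D$'s first access to block $i$, and then fixes $w$ by the probabilistic method. Your fixed random string $r$ plays exactly the role of this table $w$; note it must determine a block for both possible values of $x_i$, since the block is chosen only after $x_i$ is queried. Your Step 2 is the paper's observation that $(W_{1,X_1},\dots,W_{n,X_n})$ conditioned on $X$ has the same law as $Z$ conditioned on $X$.
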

\begin{proof}
	For every sequence $w = (w_{i,b})_{i\in [n], b\in \set{0,1}}$, with each $w_{i,b} \in \set{0,1}^m$, we define a deterministic classical query algorithm $A^w :\set{0,1}^n\to  [0,1]$ as follows:
	On input $x\in \set{0,1}^n$, simulate (implicitly) the query algorithm $D$ on input $(w_{1,x_1}, w_{2, x_2}, \dots, w_{n,x_n})$; that is, whenever $D$ asks for a query in the $i$\textsuperscript{th} block $w_{i,x_i}$ for the first time, $A^w$ queries $x_i$, and having learned $x_i$, it answers that query and any later query to the same block using the hardwired string $w_{i,x_i}$.
	Note that $A^w$ makes at most $T$ queries since $D$ makes at most $T$ queries.
	
	Next, define a random variable $W = (W_{i,b})_{i\in [n], b\in \set{0,1}}$ (independent of $X$ and $Z$) on such sequences as follows: for each $i\in [n], b\in \set{0,1}$, independently, sample $W_{i,b}$ from the distribution of $Z_i$ conditioned on $X_i=b$; if the event $X_i=b$ occurs with probability 0, define $W_{i,b}$ arbitrarily.
	Then, we have
	\[\E_{X,W}\sqbrac{\abs{A^W(X)-f(X)}} = \E_{X,W}\sqbrac{\abs{D(W_{1,X_1},\dots,W_{n, X_n})-f(X)}}  = \E_{X,Z}\sqbrac{\abs{D(Z)-f(X)}}.\]
	We used that the distribution of $(W_{1,X_1},\dots,W_{n, X_n})$ conditioned on $X$ is the same distribution as that of $Z$ conditioned on $X$.
	Finally, by the probabilistic method, there exists a choice $W=w$ for which $A^w$ satisfies the lemma statement.
\end{proof}


\subsection{Degree, Variance and Influence}

In this section, we define the degree, variance, and influence of real-valued functions over the Boolean hypercube.
We note that our definitions consider polynomials over the entire hypercube and not only the support of the distribution; this is especially natural for quantum query algorithms, whose acceptance probabilities are defined on every point of the hypercube.

\begin{definition}[Degree]
	We say that a function $f:\set{0,1}^n\to \R$ has degree (at most) $d$ if it can be represented as a multilinear polynomial of degree at most $d$, i.e., \[f(x) = \sum_{S\subseteq [n], |S|\leq d} a_S x^S,\] for every $x\in \set{0,1}^n$, where $x^S:=\prod_{i\in S}x_i$.

	We call a function $f$ a bounded degree-$d$ function/polynomial if it has degree at most $d$, and if $f(x)\in [0,1]$ for each input $x$.
\end{definition}

\begin{definition}[Variance and Influence]
	Let $\lambda_n$ be a distribution on $\set{0,1}^n$ and let $f:\set{0,1}^n\to \R$ be a function.
	The variance of $f$ is defined as
	\[ \Var\up{\lambda_n}[f] = \E_{X\sim \lambda_n}\sqbrac{\brac{f(X)-\E_{\lambda_n}[f]}^2} = \E_{X\sim \lambda_n}\sqbrac{f(X)^2}- \brac{\E_{X\sim \lambda_n}\sqbrac{f(X)}}^2.\]
	Note that $0\leq \Var\up{\lambda_n}[f] \leq 1/4$ for functions $f:\set{0,1}^n\to [0,1]$.
	
	For every coordinate $i\in [n]$, the influence of the $i$\textsuperscript{th} coordinate of $f$ is defined as
	\[ \Inf_i\up{\lambda_n}[f] = \E_{X\sim \lambda_n}\sqbrac{\brac{f(X)-f(X^{\oplus i})}^2} = \E_{X\sim \lambda_n}\sqbrac{\brac{\partial_if(X)}^2}, \]
	where $X^{\oplus i}$ is used to denote the input $X$ with the $i$\textsuperscript{th} bit flipped, and $\partial_i f$ is the $i$\textsuperscript{th} partial derivative of (the multilinear polynomial of) $f$.
	The total influence of $f$ is defined as \[\Inf\up{\lambda_n}[f] = \sum_{i=1}^n \Inf_i\up{\lambda_n}[f].\]
\end{definition}

Note that for the uniform distribution $\mu_{n,1/2}$, the above matches the usual notion of $L^2$-influence~\cite{Don14} up to a normalization constant.
Also, observe that for slice distributions $\nu_{n,p}$, the definition of influence depends on the values of the function outside the slice.
This is natural in this work, since quantum query algorithms are well-defined on all points of the hypercube.
In the following lemma, we compare this definition of influence with the more intrinsic and well-studied definition of \emph{swap-influence} over the slice (see~\cite{Fil16} for reference).
The lemma shows that a lower bound on the max-swap-influence already implies a lower bound on the above-defined max-influence, and hence the above definition leads to a weaker conjecture later on in this work (see Definition~\ref{defn:distributional_aa}).

\begin{lemma}\label{lemma:sw_inf_bound}
	Let $n\in \N,\ p\in (0,1)$, let $k = \lfloor pn \rfloor$, and let $\Omega = \set{x\in \set{0,1}^n: \sum_{i=1}^n x_i=k}$ be the slice of Hamming-weight $k$; recall that $\nu_{n,p}$ is the uniform distribution over $\Omega$.
	
	For each $x\in \set{0,1}^n$, and $i,j\in [n]$, let $x\up{i,j}$ denote the input $x$ with coordinates $i$ and $j$ swapped.
	For a function $f:\set{0,1}^n\to \R$, and $i,j\in [n]$, define the \emph{swap-influence} of $(i,j)$ on $f$ as:
	\[ \textnormal{swInf}_{i,j}\up{\nu_{n,p}}[f] = \E_{X\sim \nu_{n,p}}\sqbrac{\brac{f(X)-f(X\up{i,j})}^2}. \]
	Then, for every $i,j\in [n]$, it holds that
	\[ \textnormal{swInf}_{i,j}\up{\nu_{n,p}}[f] \leq  2\cdot \brac{\Inf_i\up{\nu_{n,p}}[f]+ \Inf_j\up{\nu_{n,p}}[f]} \leq 4\cdot \max_{i\in [n]}\Inf_i\up{\nu_{n,p}}[f].\]
\end{lemma}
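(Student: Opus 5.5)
We want to bound $\E_{X\sim\nu_{n,p}}[(f(X)-f(X\up{i,j}))^2]$. If $X_i=X_j$ then $X\up{i,j}=X$, so only inputs with $X_i\neq X_j$ contribute. For such $X$, swapping $i$ and $j$ is the same as flipping both bits: $X\up{i,j}=(X^{\oplus i})^{\oplus j}$. We therefore interpolate through the intermediate point $X^{\oplus i}$, which lies off the slice. This is allowed, since $f$ is defined on the whole cube.

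First, write
\[ f(X)-f(X\up{i,j}) = \brac{f(X)-f(X^{\oplus i})} + \brac{f(X^{\oplus i}) - f((X^{\oplus i})^{\oplus j})}. \]
Using $(a+b)^2\le 2a^2+2b^2$ gives
\[ \textnormal{swInf}_{i,j}\up{\nu_{n,p}}[f] \le 2\,\E_{X}\sqbrac{\ind[X_i\ne X_j]\brac{f(X)-f(X^{\oplus i})}^2} + 2\,\E_X\sqbrac{\ind[X_i\ne X_j]\brac{f(X^{\oplus i})-f(X\up{i,j})}^2}. \]
The first term is at most $2\Inf_i\up{\nu_{n,p}}[f]$, because the indicator is at most $1$.

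The second term is where care is needed, since $X^{\oplus i}$ is not on the slice. The fix is to rewrite the same difference as a $j$-flip of an on-slice point. On the event $X_i\ne X_j$, we have $X^{\oplus i} = (X\up{i,j})^{\oplus j}$: both strings agree with $X$ off $\{i,j\}$, and both have coordinates $i$ and $j$ equal to the common value $1-X_i=X_j$. So with $Y=X\up{i,j}$, the second difference is $f(Y^{\oplus j})-f(Y)$. The map $X\mapsto X\up{i,j}$ is a bijection of the slice $\Omega$, so $Y\sim\nu_{n,p}$, and the event $X_i\ne X_j$ is the same as $Y_i\neq Y_j$. Hence the second expectation is at most $\E_{Y\sim\nu_{n,p}}[(f(Y)-f(Y^{\oplus j}))^2]=\Inf_j\up{\nu_{n,p}}[f]$, which gives the first inequality. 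The second inequality is immediate from $\Inf_i+\Inf_j\le 2\max_k\Inf_k$.

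The only delicate step is the second term: naively it is an expectation over an off-slice point, and the bijection $X\mapsto X\up{i,j}$ is what converts it back to an on-slice $j$-influence.
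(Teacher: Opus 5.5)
Your proof is correct and takes essentially the same approach as the paper: you split through the off-slice point $X^{\oplus i}$, apply $(a+b)^2\le 2a^2+2b^2$, and use the identity $X^{\oplus i}=(X\up{i,j})^{\oplus j}$ on the event $\{X_i\neq X_j\}$ together with $X\up{i,j}\sim\nu_{n,p}$ to identify the second term with $\Inf_j\up{\nu_{n,p}}[f]$. The only difference is cosmetic: you carry the indicator $\ind[X_i\neq X_j]$ explicitly and bound it by $1$, whereas the paper drops it by observing that the left-hand side vanishes when $X_i=X_j$.
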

\begin{remark}
	We note that the definition of swap-influence only relies on the values of $f$ on the slice $\Omega$.
\end{remark}
\begin{proof}	
Note that for every $x\in \set{0,1}^n$, the inputs $x$ and $x\up{i,j}$ have the same Hamming-weight; in particular, if $X\sim \nu_{n,p}$, then $X\up{i,j}\sim \nu_{n,p}$.
Moreover, for each $i,j\in [n]$,
\begin{align*}
	\E_{X\sim \nu_{n,p}}\sqbrac{\brac{f(X)-f(X\up{i,j})}^2} &\leq \E_{X\sim \nu_{n,p}}\sqbrac{\brac{f(X)-f(X^{\oplus i})+ f\brac{(X\up{i,j})^{\oplus j}} - f(X\up{i,j})}^2}
	\\&\leq 2\cdot \E_{X\sim \nu_{n,p}}\sqbrac{\brac{f(X)-f(X^{\oplus i})}^2 + \brac{f\brac{(X\up{i,j})^{\oplus j}} - f(X\up{i,j})}^2}
	\\&= 2\cdot \brac{\Inf_i\up{\nu_{n,p}}[f]+ \Inf_j\up{\nu_{n,p}}[f]} \leq 4\cdot \max_{i\in [n]}\Inf_i\up{\nu_{n,p}}[f].
\end{align*}
The first inequality holds since if $X_i=X_j$, then the left-hand side is zero; if $X_i\not=X_j$, then $X\up{i,j} = (X^{\oplus i})^{\oplus j}$, and $f(X)-f(X\up{i,j}) = f(X) - f(X^{\oplus i}) + f\brac{(X\up{i,j})^{\oplus j}} - f(X\up{i,j})$.
\end{proof}


\subsection{Some Bounds for Low-degree Polynomials}

We state the well-known Markov brothers' inequality (see~\cite{Sha04} for reference):

\begin{lemma}[Markov brothers' inequality]
Let $g:\R\to \R$ be a polynomial of degree $d$. Then,
\[ \sup_{x\in [-1,1]} \abs{g'(x)} \leq d^2 \cdot \sup_{x\in [-1,1]} \abs{g(x)}.\]
\end{lemma}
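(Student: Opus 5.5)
The classical proof goes through Chebyshev polynomials and the trigonometric form of the inequality; I will follow that route. Write $M = \sup_{[-1,1]}|g|$ and assume $M=1$ after scaling. The argument rests on two ingredients.

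\textbf{Bernstein's inequality.} For a trigonometric polynomial $h(\theta)$ of degree $d$, we have $\sup|h'| \le d\,\sup|h|$. Applying this to $h(\theta)=g(\cos\theta)$, which is a trigonometric polynomial of degree $d$, gives
\[ |g'(\cos\theta)|\,|\sin\theta| \le d, \]
that is,
\[ |g'(x)| \le \frac{d}{\sqrt{1-x^2}} \quad \text{for } x\in(-1,1). \]
This already yields the claim $|g'(x)|\le d^2$ whenever $\sqrt{1-x^2}\ge 1/d$, in particular on $|x|\le\cos(\pi/(2d))$.

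\textbf{Endpoint region.} Near the endpoints I will use Lagrange interpolation at the Chebyshev nodes $x_k=\cos\bigl((2k-1)\pi/(2d)\bigr)$, $k=1,\dots,d$. Write
\[ g(x)=\sum_k g(x_k)\,\ell_k(x), \qquad \ell_k(x)=\frac{T_d(x)}{T_d'(x_k)(x-x_k)}. \]
For $x\ge x_1$, each derivative $\ell_k'(x)$ has the same sign; since $T_d(x)/(x-x_k)$ is a product of factors that are positive and increasing there, and the signs of $T_d'(x_k)$ alternate, the signs of the $\ell_k'$ are controlled as needed. Hence
\[ |g'(x)| \le \sum_k |\ell_k'(x)| = |T_d'(x)|. \]
The last equality follows from applying the same interpolation to $T_d$ itself, whose values at the nodes are $\pm1$ with matching signs. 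Finally, $|T_d'(x)|\le T_d'(1)=d^2$ on $[x_1,1]$, because $T_d'$ is increasing beyond its largest critical point. The case $x\le -x_1$ is symmetric.

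\textbf{Main obstacle.} The delicate step is the sign bookkeeping in the endpoint region, i.e.\ showing that all terms $g(x_k)\ell_k'(x)$ are dominated simultaneously by the corresponding terms for $T_d$. I would handle this by writing
\[ \ell_k(x)=\frac{(-1)^{k-1}}{d}\cdot\frac{\sqrt{1-x_k^2}\;T_d(x)}{x-x_k} \]
and checking directly that each $T_d(x)/(x-x_k)$ has a nonnegative derivative for $x\ge x_1$.

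Since this is a standard result, an acceptable alternative is to simply cite~\cite{Sha04} for the proof.
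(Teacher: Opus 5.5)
The paper does not prove this lemma at all; it only cites \cite{Sha04}, so your fallback of citing it is exactly what the paper does. Your Bernstein step is correct, and it does cover $\abs{x}\le\cos(\pi/(2d))$ because $\sin(\pi/(2d))\ge 1/d$. The endpoint argument as written, however, fails.

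The points $x_k=\cos((2k-1)\pi/(2d))$, $k=1,\dots,d$, are the $d$ \emph{zeros} of $T_d$. Lagrange interpolation at $d$ points reproduces only polynomials of degree at most $d-1$. So for your degree-$d$ polynomial $g$ the identity $g=\sum_k g(x_k)\ell_k$ is false. In fact $g-\sum_k g(x_k)\ell_k=c\,T_d$, where $c$ is $2^{1-d}$ times the leading coefficient of $g$; for $g=T_d$ the right-hand side of your identity is identically $0$. For the same reason $T_d(x_k)=0$, not $\pm1$, so the claimed equality $\sum_k\abs{\ell_k'(x)}=\abs{T_d'(x)}$ has no basis. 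Moreover, since the $T_d'(x_k)$ alternate in sign while $T_d(x)/(x-x_k)$ is increasing for $x\ge x_1$, the $\ell_k'$ alternate in sign rather than sharing one. It looks like you conflated the zeros of $T_d$ with its $d+1$ extremal points $\cos(j\pi/d)$.

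The standard repair, which keeps the rest of your sketch, is to interpolate $g'$ instead of $g$. Since $\deg g'\le d-1$, the identity $g'=\sum_k g'(x_k)\ell_k$ holds exactly. Your Bernstein bound gives $\abs{g'(x_k)}\le d/\sqrt{1-x_k^2}$ at each node, and $T_d'(x_k)=(-1)^{k-1}d/\sqrt{1-x_k^2}$. Hence for $x\in(x_1,1]$, where $T_d(x)>0$ and $x-x_k>0$,
\[
\abs{g'(x)}\le\sum_{k=1}^d \abs{g'(x_k)}\,\abs{\ell_k(x)}\le\sum_{k=1}^d\frac{d}{\sqrt{1-x_k^2}}\cdot\frac{\sqrt{1-x_k^2}}{d}\cdot\frac{T_d(x)}{x-x_k}=\sum_{k=1}^d\frac{T_d(x)}{x-x_k}=T_d'(x)\le T_d'(1)=d^2 .
\]
The middle equality is the logarithmic derivative of $T_d=2^{d-1}\prod_k(x-x_k)$. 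This route needs only positivity of $T_d(x)/(x-x_k)$, with no sign analysis of the $\ell_k'$.

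Interpolating $g$ at the $d+1$ extremal points $\cos(j\pi/d)$ would also be exact. But then your ``product of positive increasing factors'' argument breaks: for $j\ge1$ the product $\prod_{i\ne j}(x-\cos(i\pi/d))$ contains the factor $x-1\le 0$. The sign of each basis derivative on $[x_1,1]$ would then have to be verified separately.
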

\begin{corollary}\label{corr:markov_brothers}
Let $g:\R\to \R$ be a polynomial of degree $d$. Then,
\[ \sup_{x\in [0,1]} \abs{g'(x)} \leq 2d^2 \cdot \sup_{x\in [0,1]} \abs{g(x)}.\]
\end{corollary}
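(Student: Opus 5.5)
The plan is to derive Corollary~\ref{corr:markov_brothers} from the Markov brothers' inequality by an affine change of variables that maps $[-1,1]$ onto $[0,1]$. Given $g$ of degree $d$, define $h(t) = g\brac{\frac{1+t}{2}}$ for $t\in\R$. Then $h$ is a polynomial of degree at most $d$. As $t$ ranges over $[-1,1]$, the point $x=\frac{1+t}{2}$ ranges over $[0,1]$, and the map is a bijection. Hence
\[
\sup_{t\in[-1,1]}\abs{h(t)}=\sup_{x\in[0,1]}\abs{g(x)}.
\]

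Next I would apply the Markov brothers' inequality to $h$. If $h$ has degree $d'\le d$, this gives $\sup_{t\in[-1,1]}\abs{h'(t)}\le d'^2\sup_{t\in[-1,1]}\abs{h(t)}\le d^2\sup_{t\in[-1,1]}\abs{h(t)}$. The case where $h$ is constant is trivial, since then $h'=0$. By the chain rule, $h'(t)=\tfrac12 g'\brac{\frac{1+t}{2}}$. Using the same bijection, $\sup_{t\in[-1,1]}\abs{h'(t)}=\tfrac12\sup_{x\in[0,1]}\abs{g'(x)}$.

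Combining the two steps yields
\[
\tfrac12\sup_{x\in[0,1]}\abs{g'(x)}\le d^2\sup_{x\in[0,1]}\abs{g(x)}.
\]
Multiplying by $2$ gives the claimed bound with constant $2d^2$. There is no real obstacle here. The only points needing care are the factor $\tfrac12$ from the chain rule, which is exactly where the constant $2$ comes from, and the observation that the Markov brothers' inequality with degree bound $d$ also applies to polynomials of degree less than $d$.
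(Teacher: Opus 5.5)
Your proposal is correct and follows the paper's own proof, which likewise applies the Markov brothers' inequality to $h(t)=g\brac{\frac{1+t}{2}}$. Your explicit bookkeeping of the chain-rule factor $\tfrac12$, which produces the constant $2$, is exactly the step the paper leaves implicit.
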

\begin{proof}
	This follows by applying the Markov brothers' inequality to $h(x) = g\brac{\frac{1+x}{2}}$.
\end{proof}

We show a higher-dimensional consequence of the Markov brothers' inequality:

\begin{lemma}\label{lemma:lowdeg_flip}
	Let $f:\set{0,1}^n\to [0,1]$ be a multilinear degree-$d$ polynomial.
	Then, for every $x\in \set{0,1}^n$, and every $A\subseteq [n]$, 
	\[ \abs{\sum_{i\in A} \brac{f(x)-f(x^{\oplus i})}} \leq 2d^2, \]
	where $x^{\oplus i}$ denotes $x$ with the $i$\textsuperscript{th} coordinate flipped.
\end{lemma}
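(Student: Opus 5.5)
Fix $x\in\set{0,1}^n$ and $A\subseteq[n]$. The plan is to restrict $f$ to a one-parameter line through $x$ that moves every coordinate in $A$ simultaneously toward its flipped value. Then $\sum_{i\in A}(f(x)-f(x^{\oplus i}))$ becomes, up to sign, the derivative of a univariate polynomial at an endpoint of $[0,1]$, and Corollary~\ref{corr:markov_brothers} bounds that derivative.

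Concretely, define $y(t)\in[0,1]^n$ for $t\in[0,1]$ by $y(t)_i = x_i$ for $i\notin A$ and $y(t)_i = (1-t)x_i + t(1-x_i)$ for $i\in A$. Set $g(t) = f(y(t))$, where $f$ denotes its multilinear extension. Each coordinate of $y(t)$ is affine in $t$ and $f$ has degree at most $d$, so $g$ is a univariate polynomial of degree at most $d$.

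The key step is boundedness: $0\le g(t)\le 1$ on $[0,1]$. This holds because for $y\in[0,1]^n$, the value of the multilinear extension $f(y)$ equals $\E[f(Z)]$, where $Z$ has independent coordinates with $Z_i\sim\textnormal{Bernoulli}(y_i)$. So $f(y)$ is a convex combination of values of $f$ on the cube, which lie in $[0,1]$. Subtracting $1/2$ gives $\sup_{[0,1]}|g-1/2|\le 1/2$, and Corollary~\ref{corr:markov_brothers} then yields $\sup_{[0,1]}|g'|\le 2d^2\cdot\frac12 = d^2$. This is even stronger than the stated bound; without the shift one directly gets $2d^2$.

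Next I compute $g'(0)$ by the chain rule. We have $g'(0)=\sum_{i\in A}\partial_i f(x)\cdot(1-2x_i)$. By multilinearity, $\partial_i f(x)\,(1-2x_i) = f(x^{\oplus i}) - f(x)$: if $x_i=0$, this is $f(x\text{ with }x_i=1)-f(x)$, and if $x_i=1$, it is $-(f(x)-f(x\text{ with }x_i=0))$. Hence $g'(0) = -\sum_{i\in A}(f(x)-f(x^{\oplus i}))$, and the bound on $|g'(0)|$ gives the lemma.

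The main point needing care is the boundedness of the multilinear extension on the whole interval, since Markov's inequality requires a sup bound over all of $[0,1]$, not just at the endpoints. The probabilistic interpretation of the multilinear extension handles this cleanly. The remaining steps are routine.
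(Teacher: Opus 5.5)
Your proposal is correct and essentially matches the paper's proof: your $g(t)=f(y(t))$ is exactly the paper's $g(t)=\E[f(X(t))]$, where each coordinate in $A$ is flipped independently with probability $t$, and both arguments bound $g'(0)$ with Corollary~\ref{corr:markov_brothers}. Your centering at $1/2$, which gives $d^2$ in place of $2d^2$, is a valid minor sharpening.
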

\begin{proof}
	Fix $x\in \set{0,1}^n$ and $A\subseteq [n]$.
	For $t\in [0,1]$, let $X(t) \in \set{0,1}^n$ be obtained by randomly flipping each coordinate of $x$ in the set $A$ with probability $t$, and leaving coordinates outside $A$ unchanged.
	Define a polynomial $g:\R\to \R$ as 
	\[ g(t) = \E[f(X(t))].\]
	This is a univariate polynomial of degree at most $d$, and satisfies $\sup_{t\in [0,1]}\abs{g(t)}\leq 1$. Moreover, $g'(0) = \sum_{i\in A} \brac{f(x^{\oplus i})-f(x)}$.
	The result now follows by Markov brothers' inequality (Corollary~\ref{corr:markov_brothers}).
\end{proof}

We show an upper bound on the total influence of a bounded low-degree function:

\begin{lemma}[A Total Influence Upper Bound]\label{lemma:tot_inf_bound}
	Let $\lambda_n$ be a distribution on $\set{0,1}^n$ and let $f:\set{0,1}^n\to [0,1]$ be a bounded degree-$d$ polynomial.
	Then,
	\[ \Inf\up{\lambda_n}[f] \leq 4d^2 .\]
\end{lemma}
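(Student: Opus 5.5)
The plan is a pointwise bound: show that for every fixed $x\in\set{0,1}^n$,
\[ \sum_{i=1}^n \brac{f(x)-f(x^{\oplus i})}^2 \leq 4d^2 , \]
and then take expectation over $X\sim\lambda_n$. Since the total influence is $\E_{X\sim\lambda_n}\sqbrac{\sum_i (f(X)-f(X^{\oplus i}))^2}$, this gives the claim for every distribution $\lambda_n$ at once. The distribution plays no role beyond averaging.

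To prove the pointwise bound, fix $x$ and write $a_i = f(x)-f(x^{\oplus i})$. Since $f$ takes values in $[0,1]$, each $\abs{a_i}\leq 1$, so $a_i^2\leq \abs{a_i}$. Hence it suffices to bound $\sum_i \abs{a_i}$.

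For this, split the coordinates by sign: let $A_+=\set{i: a_i> 0}$ and $A_-=\set{i: a_i<0}$. Then
\[ \sum_i \abs{a_i} = \sum_{i\in A_+} a_i - \sum_{i\in A_-} a_i . \]
Applying Lemma~\ref{lemma:lowdeg_flip} with $A=A_+$ and with $A=A_-$ bounds each of the two sums by $2d^2$ in absolute value. This gives $\sum_i\abs{a_i}\leq 4d^2$, hence $\sum_i a_i^2\leq 4d^2$. Averaging over $X\sim\lambda_n$ then yields $\Inf\up{\lambda_n}[f]\leq 4d^2$.

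There is no real obstacle here. The only idea is the sign-splitting, which converts the signed-sum bound of Lemma~\ref{lemma:lowdeg_flip} into an $\ell_1$ bound; boundedness in $[0,1]$ then converts squares into absolute values.
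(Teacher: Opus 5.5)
Your proposal is correct and matches the paper's proof essentially step for step. Both prove the pointwise bound at each fixed $x$, use $a_i^2\leq\abs{a_i}$ from boundedness, and split the coordinates by the sign of $a_i$ so that Lemma~\ref{lemma:lowdeg_flip} bounds each part by $2d^2$. The only difference is that the paper takes $a_i = f(x^{\oplus i})-f(x)$, the opposite sign convention, which is immaterial.
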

\begin{proof}
	It suffices to prove that for every $x\in \set{0,1}^n$,  $\sum_{i=1}^n\brac{f(x)-f(x^{\oplus i})}^2 \leq 4d^2$.
	
	Fix some $x\in \set{0,1}^n$ and let $a_i = f(x^{\oplus i})-f(x)$ for each $i\in [n]$.
	By Lemma~\ref{lemma:lowdeg_flip}, for any set of coordinates $A\subseteq [n]$, it holds that
	$ \abs{\sum_{i\in A} a_i} \leq 2d^2. $
	Using that $\abs{a_i}\leq 1$ for each $i$, we have
	\[ \sum_{i=1}^n a_i^2 \leq \sum_{i=1}^n \abs{a_i} = \Big|\sum_{i:a_i>0} a_i\Big| + \Big|\sum_{i:a_i<0} a_i\Big| \leq 2d^2+2d^2 = 4d^2.  \qedhere\]
\end{proof}

We also prove a (weak) lower bound on the maximum coordinate influence for our distributions of interest:

\begin{lemma}[A Max Influence Lower Bound]\label{lemma:maxinf_lb}
	Let $p\in (0,1)$, let $\lambda_n$ be either $\mu_{n,p}$ or $\nu_{n,p}$, and let $f:\set{0,1}^n\to \R$ be a function.
	Then,
	\[ \max_{i\in [n]}\Inf_i\up{\lambda_n}[f] \geq \frac{\Var\up{\lambda_n}[f]}{2n^2} . \]
\end{lemma}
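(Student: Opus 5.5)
We will bound the variance by a path argument: write the variance in terms of pairs $X,Y$ drawn independently from $\lambda_n$, connect $X$ to $Y$ by a short path of single-coordinate flips, and use Cauchy--Schwarz. The starting identity is
\[
\Var\up{\lambda_n}[f] = \tfrac12\E_{X,Y\sim\lambda_n}\sqbrac{\brac{f(X)-f(Y)}^2}.
\]
Fix $X$ and $Y$. Let $i_1<\dots<i_m$ be the coordinates where they differ, so $m\le n$. Define $Z^{(0)}=X$ and $Z^{(t)}=(Z^{(t-1)})^{\oplus i_t}$, so that $Z^{(m)}=Y$. By Cauchy--Schwarz,
\[
\brac{f(X)-f(Y)}^2 \le n\sum_{t=1}^{m}\brac{f(Z^{(t-1)})-f(Z^{(t)})}^2 .
\]
It then suffices to show that, for every coordinate $i$ and every step $t$, the expected squared difference of the flip at $i$ is at most $\Inf_i\up{\lambda_n}[f]$ or a constant multiple of it.

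For $\mu_{n,p}$ we will use a coupling fact. The intermediate point $Z^{(t-1)}$ agrees with $Y$ on the first $t-1$ differing coordinates and with $X$ elsewhere. For any fixed ordering, the coordinatewise "take $Y$ on a set $S$, $X$ outside" string is again distributed as $\mu_{n,p}$, since the coordinates are i.i.d. It is cleaner to index by coordinate. We write $f(X)-f(Y)$ as a telescoping sum over all $i\in[n]$, using the hybrids $H^{(i)}=(Y_1,\dots,Y_i,X_{i+1},\dots,X_n)$. Each $H^{(i)}\sim\mu_{n,p}$, and $H^{(i-1)},H^{(i)}$ differ only possibly at coordinate $i$. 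Thus $\brac{f(H^{(i-1)})-f(H^{(i)})}^2=\ind[X_i\ne Y_i]\brac{\partial_i f(H^{(i-1)})}^2$ has expectation at most $\Inf_i\up{\mu_{n,p}}[f]$. Cauchy--Schwarz over the $n$ terms then gives $2\Var\le n\sum_i\Inf_i\le n^2\max_i\Inf_i$, which is even slightly better than needed.

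For $\nu_{n,p}$ the hybrids leave the slice, so we will instead telescope using swaps and Lemma~\ref{lemma:sw_inf_bound}. Given $X,Y$ in the slice, pair up the coordinates where $X_i=1,Y_i=0$ with those where $X_j=0,Y_j=1$; there are equally many, at most $n/2$ pairs. Then $Y$ is reached from $X$ by at most $n/2$ transpositions, each staying inside the slice. To control the distribution of the intermediate points, we will replace $Y$ by a uniformly random element and use a symmetrization trick. Write $Y=\sigma X$ for a uniformly random permutation $\sigma$, which is valid since a uniform permutation of a slice element is uniform on the slice. Decompose $\sigma$ as a product of at most $n-1$ transpositions $\tau_1\cdots\tau_{n-1}$; for instance, a uniform permutation can be generated by the random-transposition shuffle $\tau_k=(k,j_k)$ with $j_k$ uniform in $\{k,\dots,n\}$. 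Each prefix $\tau_{k}\cdots\tau_1 X$ is then uniform on the slice, since $X$ is uniform and permutations preserve $\nu_{n,p}$. For the step at $k$, the conditional law of $j_k$ is uniform on the remaining positions. Hence
\[
\E\brac{f(Z_{k-1})-f(Z_k)}^2\le \max_{i,j}\textnormal{swInf}_{i,j}\le 4\max_i\Inf_i\up{\nu_{n,p}}[f]
\]
by Lemma~\ref{lemma:sw_inf_bound}. Here the point is that $Z_{k-1}$ is uniform on the slice and independent of the transposition $(k,j_k)$, which holds because $X$ is uniform and independent of $\sigma$. Cauchy--Schwarz over the $n-1$ steps gives
\[
2\Var\le (n-1)\cdot (n-1)\cdot 4\max_i\Inf_i .
\]
This is off by a constant factor, roughly $2\Var\le4n^2\max\Inf$, so we need about a factor $4$ of slack.

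The main obstacle is this constant in the slice case. To recover the factor, first note that only transpositions with $X_{k}\ne X_{j_k}$ contribute. We can then use the first (sharper) form of Lemma~\ref{lemma:sw_inf_bound}, $2(\Inf_i+\Inf_j)$, and sum over $k$ more carefully. The alternative is to use the matching-of-differences construction, which needs at most $n/2$ swaps: Cauchy--Schwarz then costs a factor $n/2$ instead of $n$, and the number of terms is also at most $n/2$, each bounded by $4\max\Inf$. This gives $2\Var\le (n/2)^2\cdot4\max\Inf=n^2\max\Inf$, which is the claim. That route requires each intermediate point of the matching path to be uniform on the slice and independent of the swapped pair. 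We would achieve this by choosing the matching uniformly at random and ordering the swaps randomly, and we would verify that exchangeability holds.
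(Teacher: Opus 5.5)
Your product-measure argument is exactly the paper's hybrid argument. Your random-transposition argument for the slice is also essentially the paper's proof: the paper writes a uniform permutation as at most $n$ transpositions, observes that conditioned on the permutation every intermediate point is $\nu_{n,p}$-distributed, and combines Cauchy--Schwarz with Lemma~\ref{lemma:sw_inf_bound}.

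The real problem is that you misread the target constant. The claim $\max_i\Inf_i\up{\nu_{n,p}}[f]\geq \Var\up{\nu_{n,p}}[f]/(2n^2)$ is equivalent to $2\Var\up{\nu_{n,p}}[f]\leq 4n^2\max_i\Inf_i\up{\nu_{n,p}}[f]$. Your shuffle bound $2\Var\le 4(n-1)^2\max_i\Inf_i$ therefore already proves the lemma, slightly more sharply than the paper's $\Var\le 2n^2\max_i\Inf_i$. There is no missing factor of $4$. Likewise, your product-case bound $2\Var\le n^2\max_i\Inf_i$ beats the target by a factor of $4$, not ``slightly''.

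So stop there and discard the last paragraph, whose key requirement is false as stated. In the matching path, the next pair $(i,j)$ to be swapped always satisfies $Z_i=1$, $Z_j=0$ for the current point $Z$. Hence $Z$ cannot be independent of the swapped pair. Conditioned on the pair, $Z$ is uniform only on $\{z: \sum_\ell z_\ell=k,\ z_i=1,\ z_j=0\}$ with $k=\lfloor pn\rfloor$. The expected cost of that step is then $\frac{n(n-1)}{2k(n-k)}\cdot\textnormal{swInf}_{i,j}$, which exceeds $\textnormal{swInf}_{i,j}$ by a factor of roughly $1/(2p(1-p))\geq 2$. So the per-step bound $4\max_i\Inf_i$ you use there does not follow. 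The route can be rescued by also using that the number of swaps $m$ satisfies $m\le\min(k,n-k)$ and $\E[m]=k(n-k)/n$, but this is unnecessary given your shuffle argument.
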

\begin{proof}
	Let $v = \Var\up{\lambda_n}[f]$, and let $M = \max_{i\in [n]}\Inf_i\up{\lambda_n}[f]$.
	Consider the following cases:
	\begin{enumerate}
		\item Suppose $\lambda_n = \mu_{n,p}$.
		Let $X, Y\sim \mu_{n,p}$ be independent.
		For each $i\in \set{0,1,\dots,n}$, define $Z\up{i} = (X_1,\dots,X_i, Y_{i+1}, \dots,Y_n)$; observe that each $Z\up{i}\sim \mu_{n,p}$.
		By the Cauchy-Schwarz inequality, we have
		\begin{align*}
			2\cdot \Var\up{\mu_{n,p}}[f] &=  \E\sqbrac{(f(X)-f(Y))^2}
			\\&= \E\sqbrac{\brac{\sum_{i\in [n]}\brac{f(Z\up{i})-f(Z\up{i-1})}}^2}
			\\&\leq n\cdot \sum_{i\in [n]}\E\sqbrac{\brac{f(Z\up{i})-f(Z\up{i-1})}^2}
			\\&= n\cdot  \sum_{i\in [n]} \Pr[X_i\not=Y_i]\cdot \Inf_i\up{\mu_{n,p}}[f]
			\\&\leq n\cdot n\cdot 1\cdot M \leq n^2M,
		\end{align*}
		as desired.
		We used that if $X_i=Y_i$, then $Z\up{i} = Z\up{i-1}$; if $X_i\not= Y_i$, then $Z\up{i} = (Z\up{i-1})^{\oplus i}$ and $\abs{f(Z\up{i})-f(Z\up{i-1})} = \abs{\partial_if(Z\up{i-1})}$ does not depend on the inputs $X_i,Y_i$ in coordinate $i$.
		
		\item Suppose $\lambda_n = \nu_{n,p}$. For each $x\in \set{0,1}^n$, and $i,j\in [n]$, we use $x\up{i,j}$ to denote the input $x$ with coordinates $i$ and $j$ swapped.
		Note that $x$ and $x\up{i,j}$ have the same Hamming-weight; in particular, if $X\sim \nu_{n,p}$, then $X\up{i,j}\sim \nu_{n,p}$.
		By Lemma~\ref{lemma:sw_inf_bound}, 
		this satisfies that for each $i,j\in [n]$,
		\[ \E_{X\sim \nu_{n,p}}\sqbrac{\brac{f(X)-f(X\up{i,j})}^2}\leq 2\cdot \brac{\Inf_i\up{\nu_{n,p}}[f]+ \Inf_j\up{\nu_{n,p}}[f]} \leq 4M.\]
		
		Now, let $X\sim \nu_{n,p}$, let $\pi:[n]\to[n]$ be a permutation chosen uniformly at random, and let $Y = (X_{\pi(1)}, X_{\pi(2)}, \dots, X_{\pi(n)})$; then $X, Y \sim \nu_{n,p}$ are independent.
		We may write $\pi$ as a series of transpositions/swaps $(i_1,j_1), (i_2,j_2),\dots,(i_\ell,j_\ell)$, with $\ell\leq n$.
		Defining $Z\up{0}=X$, and for each $t\in \set{1,2,\dots,\ell}$, $Z\up{t} = (Z\up{t-1})\up{i_t,j_t}$, we have that $Z\up{\ell}=Y$.
		Observe that conditioned on $\pi$, each $Z\up{t}\sim \nu_{n,p}$. 
		Hence, by the Cauchy-Schwarz inequality, we have
		\begin{align*}
			\Var\up{\nu_{n,p}}[f] &= \frac{1}{2}\cdot \E_{X,\pi}\sqbrac{(f(X)-f(Y))^2}
			\\&=  \frac{1}{2}\cdot \E_{X,\pi}\sqbrac{\brac{\sum_{t\in [\ell]}\brac{f(Z\up{t})-f(Z\up{t-1})}}^2}
			\\&\leq \frac{1}{2}\cdot \E_{\pi}\sqbrac{\ell\cdot \sum_{t\in [\ell]}\E_{X}\sqbrac{\brac{f(Z\up{t})-f(Z\up{t-1})}^2}}
			\\&\leq \frac{1}{2}\cdot \E_{\pi}\sqbrac{\ell\cdot \ell\cdot 4M} \leq 2n^2M.\qedhere
		\end{align*}
	\end{enumerate}
\end{proof}
\begin{remark}
	We note that the quadratic dependence on $n$ can be improved to a linear dependence for both distributions, and we only work with the above bounds to keep the proof self-contained.
	\begin{enumerate}
		\item For the biased hypercube, elementary Fourier analysis~\cite[Sections 8.3-8.4]{Don14} implies the Poincar\'e inequality $ \Var\up{\mu_{n,p}}[f] \leq p(1-p) \cdot \sum_{i=1}^n \Inf_i\up{\mu_{n,p}}[f].$
		\item For the slice, the following Poincar\'e inequality is known~\cite[Lemma 3.14]{FM19}:\footnote{In the work~\cite{FM19}, the inequality is stated for harmonic polynomials. Since any polynomial over the slice has a harmonic extension, and since both sides of the inequality depend only on values of $f$ on the slice, the inequality applies here as well.}
	\[ \Var\up{\nu_{n,p}}[f] \leq \frac{1}{2n}\sum_{i<j} \E_{X\sim \nu_{n,p}}\sqbrac{\brac{f(X)-f(X\up{i,j})}^2} .\]
	By Lemma~\ref{lemma:sw_inf_bound}, the right-hand side is bounded by $n\cdot \max_{i\in [n]}\Inf_i\up{\nu_{n,p}}[f]$.
	\end{enumerate}
\end{remark}


\section{Classical Simulation of Quantum Algorithms}\label{sec:simul_equiv}

\begin{definition}[Classical Simulation of Quantum Query Algorithms]\label{defn:qcsim}
	Let $\blambda = (\lambda_n)_{n\in \N}$ be a family of distributions, where $\lambda_n$ is a distribution over $\set{0,1}^n$, and let $S:\N\times (0,1) \to \N$ be a function.
	We say that quantum query algorithms can be $(\blambda,S)$-simulated by classical query algorithms if the following holds:
	
	Let $n\in \N,\ \epsilon \in (0,1),\ T\in \N$, and let $Q:\set{0,1}^n \to [0,1]$ be any quantum query algorithm making at most $T$ queries.
	Then, there exists a deterministic classical query algorithm $D:\set{0,1}^n \to [0,1]$ making at most $S(T, \epsilon)$ queries and such that
	\[ \E_{X\sim \lambda_n}\sqbrac{\abs{D(X)-Q(X)}} \leq \epsilon. \]
	Note  the function $S$ does not depend on the input length $n$ directly.
\end{definition}

By Markov's inequality, the above also implies that on \emph{most inputs}, the classical algorithm's answer is (additively) close to the quantum algorithm's answer. 

\begin{definition}[Simulation with Polynomial Overhead]
	We say that quantum query algorithms over a family of distributions $\blambda = (\lambda_n)_{n\in \N}$ can be \emph{polynomially classically simulated} if they can be $(\blambda, S)$-simulated for some function $S$ satisfying \[S(T, \epsilon) \leq \poly(T, 1/\epsilon),\] where the polynomial is independent of $n$.	
\end{definition}

Our main result is the following:

\begin{theorem}\label{thm:equiv_prod_slice}
	For any $p,q\in (0,1)$, the following are equivalent:
	\begin{enumerate}
		\item Quantum query algorithms over $\pbias$ can be polynomially classically simulated.
		\item Quantum query algorithms over $\qslice$ can be polynomially classically simulated.
	\end{enumerate}	
	In particular, the families $\pbias, \qbias, \pslice, \qslice$ are all equivalent with respect to polynomial classical simulation.
\end{theorem}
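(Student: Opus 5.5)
We will prove a chain of reductions, each showing that polynomial simulation under one family implies it under another. The plan is to build a cycle: product$(p)$ $\Rightarrow$ product$(q)$ for arbitrary $p,q$, product $\Rightarrow$ slice, and slice $\Rightarrow$ product. Every reduction has the same form. Given a $T$-query quantum algorithm $Q$ on $n$ bits, we build a quantum algorithm $Q'$ on some number $N$ of bits, with $T' = O(T)$ queries, such that $Q'$ on the source distribution simulates $Q$ on the target distribution. We then apply the assumed classical simulation to $Q'$ and pull the resulting decision tree back to $n$ bits, using Lemma~\ref{lemma:derand_dt} or the hybrid bounds of Lemma~\ref{lemma:hyb}.

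\textbf{Product to product.} Suppose simulation holds under $\pbias$ and we want it under $\qbias$. Pick a fixed $m=m(p,q)$ and a function $g:\set{0,1}^m\to\set{0,1}$ such that $g(\mu_{m,p})$ is Bernoulli with bias close to $q$. Take $Q'(z)=Q(g(z_1),\dots,g(z_n))$ on $N=mn$ bits; it makes $2T$ queries, since each query to $g$ is computed and uncomputed with $m$-bit blocks read by a single query. An exact bias is impossible in general, so we handle this in one of two ways:
\begin{itemize}
\item choose $m$ so the bias error $\delta$ satisfies $4T\sqrt{\delta}\le\epsilon$, with $m=O(\log(T/\epsilon))$, and absorb the error with a coupling and Lemma~\ref{lemma:hyb};
\item or, more cleanly, use a randomized $g$ that reads a few bits and outputs Bernoulli$(q)$ exactly.
\end{itemize}
Given the decision tree $D$ for $Q'$ under $\mu_{N,p}$, Lemma~\ref{lemma:derand_dt} with $X_i=g(Z_i)$ yields a $T'$-query tree $A$ on $X\sim\mu_{n,q}$, because the pairs $(X_i,Z_i)$ are independent. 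The query blowup is at most a factor $m$, which remains polynomial.

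\textbf{Slice to product.} Suppose simulation holds under $\qslice$ and we want it under $\pbias$. First reduce to $q=1/2$, or simply pad. Embed $x\in\set{0,1}^n$ into $N\gg n$ coordinates by letting $Q'$ read only the first $n$ coordinates of a slice string $y\sim\nu_{N,q}$. The first $n$ coordinates of a slice are close to $\mu_{n,k/N}$ in total variation, at distance $O(n/\sqrt N)$ or better. Since $S$ does not depend on $N$, we may take $N$ huge. That gives bias $q$, and the product-to-product step changes it to $p$. After obtaining a tree $D$ for $Q'$, we restrict attention to trees that query only the first $n$ coordinates. Queries outside these coordinates can be answered by the tree itself sampling from the conditional slice distribution; by near-independence this costs only a small TV error, and we then fix the randomness.

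\textbf{Product to slice.} This is the step I expect to be the main obstacle: a slice string cannot be produced coordinate-wise from independent bits. The plan is to couple $X\sim\nu_{n,q}$ with $Y\sim\mu_{n,q'}$ for suitable $q'$ near $q$. Draw $Y$, then flip a uniformly random set of $|\,|Y|-k\,|$ ones or zeros to land on the slice. Conditioned on $X$, each coordinate then differs with probability $\gamma=O(1/\sqrt n)$. By Lemma~\ref{lemma:hyb}, both $Q$ and the classical tree change by $O(T n^{-1/4})$. This suffices when $n\ge\poly(T/\epsilon)$; for small $n$ the trivial tree that queries all $n\le\poly(T/\epsilon)$ bits already works.

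The catch is that the hybrid lemma requires the per-coordinate bound conditioned on $X$ almost surely. We must therefore define the coupling from $X$'s side: sample $Y$ by rerandomizing a random set of coordinates of $X$, chosen so that $Y$ is exactly $\mu_{n,q'}$ and the marginals work out. Concretely, we sample the target weight $w\sim\textnormal{Bin}(n,q)$ and adjust $|w-k|$ random positions of $X$. This yields exactly $\mu_{n,q}$ by symmetry, and each coordinate flips with probability $\E|w-k|/n=O(1/\sqrt n)$ uniformly in $X$. The slice-to-product direction uses the same coupling with the roles of $X$ and $Y$ exchanged, with symmetry again making the conditional flip probability uniform. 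This is cleaner than the TV argument, so I would use it for both directions.
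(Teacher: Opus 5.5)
Your final plan is essentially the paper's proof. It uses the Keller-style gadget $g$ with bias $r\in[q,q+2^{-am}]$ and $m=O(\log(T/\epsilon))$, lifted to a $2mT$-query algorithm (single-bit queries, not $2T$) and pulled back via Lemma~\ref{lemma:derand_dt} together with a $q$--$r$ coupling and Lemma~\ref{lemma:hyb}; for slice versus product it uses the coupling that draws $W\sim\textnormal{Binomial}(n,p)$ and adjusts $|W-k|$ random positions of the slice sample $X$, applies the hybrid bounds to both $Q$ and the same tree $D$ (in either direction, since the conclusion of Lemma~\ref{lemma:hyb} is symmetric and the almost-sure bound conditioned on $X$ suffices), and uses brute force for small $n$.

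Two small corrections. First, the tree's hybrid error is $S(T,\epsilon/2)\cdot O(1/\sqrt{an})$ rather than $O(Tn^{-1/4})$, so the threshold is $n\geq\poly(S(T,\epsilon/2),T,1/\epsilon)$. Second, drop the ``exact randomized gadget'' option: a function of a fixed finite block of $p$-biased bits cannot have bias exactly $q$ in general (e.g.\ $p=1/2$, $q=1/3$), and internal randomness $\rho$ breaks the transfer, because the simulator then only approximates the $\rho$-average of $Q(G(z,\rho))$, which does not bound $\E\abs{D(Z)-Q(X)}$.
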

\begin{remark}
	We note that the resulting simulation overheads in the above theorem depend polynomially on $1/\min(p,1-p,q,1-q)$.
\end{remark}

\begin{remark}\label{remark:general_dist}
	More generally, our proofs also show an equivalence of the above families of distributions with families $\blambda = (\lambda_n)_{n\in \N}$ of the following form: Let $a\in (0,1/2]$ be a constant, and for each $n\in \N$, let $\lambda_n$ be either $\mu_{n,p_n}$ or $\nu_{n,p_n}$, with $p_n\in [a,1-a]$. 

    More generally still, the same conclusion holds when, for every $n\in \N$, the distribution $\lambda_n$ is either a slice distribution $\nu_{n,p_n}$, with $p_n\in [a,1-a]$, or a product distribution with bounded coordinate-dependent biases, i.e., $\lambda_n = \bigotimes_{i=1}^n \textnormal{Bernoulli}(p_n^{(i)})$, with each $p_n^{(i)} \in [a,1-a]$.
\end{remark}

The remainder of this section is devoted to the proof of the above theorem.


\subsection{Equivalence of Simulation over Product Measures with Different Biases}\label{sec:gadget_comp}

In this section, we prove that the families $\pbias$ and $\qbias$ are equivalent with respect to classical simulation of quantum query algorithms with polynomial overhead.

The main idea behind the proof is the following: A single $q$-biased bit can be (approximately) simulated by a function $g$ taking as input a small number of $p$-biased bits (Lemma~\ref{lemma:inner_fn}); this gadget is motivated by a similar one used by Keller~\cite{Kel12}.
Then, given any function $f:\set{0,1}^n\to [0,1]$, to understand its ``behavior'' over the $\qbias$, it suffices to understand the behavior of $f\circ g^n$ over $\pbias$.
Finally, to convert the approximation to an exact statement, we may use continuity or the hybrid argument (Lemma~\ref{lemma:hyb}). 

\begin{proposition}\label{prop:equiv_prod}
	Let $p,q\in (0,1)$.
	Suppose that quantum query algorithms can be $(\pbias, S)$-simulated by classical query algorithms.
	Then, quantum query algorithms can be $(\qbias, S')$-simulated by classical query algorithms, with 
	\[ S'(T, \epsilon) \leq S\brac{2T\cdot \left\lceil\frac{2}{a}\log_2\brac{\frac{8T}{\epsilon \sqrt{a}}}\right\rceil, \frac{\epsilon}{2}} ,\]
	where $a = \min(p,1-p,q,1-q)$.
	
	In particular, if quantum query algorithms over $\pbias$ can be polynomially classically simulated, then quantum query algorithms over $\qbias$ can be polynomially classically simulated.
\end{proposition}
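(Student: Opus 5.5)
The plan is to simulate each $q$-biased input bit by a small gadget $g\colon\set{0,1}^m\to\set{0,1}$ evaluated on $p$-biased bits. We then run the assumed $\pbias$-simulation on the composed quantum algorithm, and finally transfer the result back to $q$-biased inputs using the hybrid argument (Lemma~\ref{lemma:hyb}) and derandomization (Lemma~\ref{lemma:derand_dt}). Throughout, set $m=\left\lceil\frac{2}{a}\log_2\brac{\frac{8T}{\epsilon\sqrt a}}\right\rceil$.

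\emph{Step 1 (gadget).} We construct $g$ such that, for $W\sim\mu_{m,p}$, the quantity $q'=\Pr[g(W)=1]$ satisfies $\abs{q'-q}\le(1-a)^m$. To do this, list the strings of $\set{0,1}^m$ in any fixed order and let $g$ be the indicator of the longest prefix whose total $\mu_{m,p}$-mass is at most $q$. Adding one more string would overshoot, and each atom has mass at most $\max(p,1-p)^m\le(1-a)^m$, which gives the bound. Since $1-a\le e^{-a}\le 2^{-a}$, the choice of $m$ yields
\[(1-a)^m\le 2^{-am}\le \brac{\frac{\epsilon\sqrt a}{8T}}^2=\frac{a\epsilon^2}{64T^2}.\]

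\emph{Step 2 (coupling).} Independently for each $i\in[n]$, sample $Z_i\sim\mu_{m,p}$ and set $Y_i=g(Z_i)\sim\textnormal{Bernoulli}(q')$. Couple $X_i\sim\textnormal{Bernoulli}(q)$ with $Z_i$ through a maximal coupling of $X_i$ and $Y_i$ (sample $X_i$, then $Y_i$ given $X_i$, then $Z_i$ given $Y_i$). The pairs $(X_i,Z_i)$ are then mutually independent. Moreover, $\Pr[Y_i\ne X_i\mid X_i]\le \abs{q-q'}/\min(q,1-q)\le\gamma$ with $\gamma:=\epsilon^2/(64T^2)$; this is where the factor $a$ in Step 1 is consumed. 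Checking this conditional (not merely unconditional) bound, which Lemma~\ref{lemma:hyb} requires, is the one delicate point. Lemma~\ref{lemma:hyb} then gives
\[\E\abs{Q(X)-Q(Y)}\le 4T\sqrt\gamma=\epsilon/2.\]

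\emph{Step 3 (simulation and derandomization).} Define $Q'(z)=Q(g(z_1),\dots,g(z_n))$ on $\set{0,1}^{mn}$. Each query of $Q$ is answered by coherently computing $g$ on the relevant block ($m$ queries) and uncomputing it ($m$ more), so $Q'$ makes at most $2Tm$ queries. Applying the hypothesis to $Q'$ under $\mu_{mn,p}$, and noting that $Z=(Z_1,\dots,Z_n)\sim\mu_{mn,p}$, we obtain a deterministic $D$ with $S(2Tm,\epsilon/2)$ queries and $\E\abs{D(Z)-Q(Y)}\le\epsilon/2$. By the triangle inequality, $\E_{X,Z}\abs{D(Z)-Q(X)}\le\epsilon$. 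Lemma~\ref{lemma:derand_dt} with $f=Q$ then yields a deterministic algorithm $A$ on $\set{0,1}^n$ with the same query bound and $\E_{X\sim\mu_{n,q}}\abs{A(X)-Q(X)}\le\epsilon$. This matches the claimed bound on $S'$.

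The polynomial consequence is immediate: $m=O\brac{\frac1a\log(T/(\epsilon a))}$, so if $S$ is polynomial in $T$ and $1/\epsilon$, then so is $S'$.
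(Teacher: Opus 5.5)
Your proof is correct and follows essentially the same route as the paper, with the same choice of $m$. The shared ingredients are the prefix gadget, the lifted $2mT$-query algorithm $Q\circ G$, and the hybrid argument applied only to $Q$ with the conditional bound $\abs{q-q'}/\min(q,1-q)\le 2^{-am}/a$, followed by derandomization via Lemma~\ref{lemma:derand_dt}. The differences are cosmetic: your gadget undershoots $q$ instead of overshooting it, and you chain $X\to Y\to Z$ into one coupling so that a single application of Lemma~\ref{lemma:derand_dt} suffices. The paper instead derandomizes twice, first obtaining a simulator over $\mu_{n,r}$ in Lemma~\ref{lemma:bias_classical_alg} and then transferring it to $\mu_{n,q}$.
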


The remainder of this section is devoted to the proof of the above proposition.
For this, we fix some $p,q\in (0,1)$, and let $a = \min(p,1-p,q,1-q) \in (0,1/2]$.
Also, suppose that quantum query algorithms can be $(\pbias, S)$-simulated by classical query algorithms.

Let $m\in \N$ be an integer parameter.
First, we show that the $q$-biased distribution can be well-approximated by a Boolean function over the $p$-biased distribution:

\begin{lemma}\label{lemma:inner_fn}
	There exists a function $g:\set{0,1}^m \to \set{0,1}$ such that 
	\[ q \leq \Pr_{z\sim \mu_{m,p}}\sqbrac{g(z)=1} \leq q+2^{-am} .\]
\end{lemma}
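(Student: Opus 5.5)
We need a Boolean function $g$ on $m$ bits that are $p$-biased whose acceptance probability lies in $[q, q+2^{-am}]$. The idea is to order the $2^m$ points of $\set{0,1}^m$ and accept an initial segment, stopping as soon as the accumulated mass reaches $q$. The overshoot is then at most the largest point mass, $\max(p,1-p)^m = (1-\min(p,1-p))^m \le (1-a)^m \le e^{-am} \le 2^{-am}$. The last inequality uses $e^{-1}\le 1/2$.

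Concretely, let $z^{(1)},\dots,z^{(2^m)}$ be any enumeration of $\set{0,1}^m$, and write $w_j = \Pr_{z\sim\mu_{m,p}}[z=z^{(j)}] = p^{|z^{(j)}|}(1-p)^{m-|z^{(j)}|}$. Each $w_j$ satisfies $w_j \le \max(p,1-p)^m$. Define the partial sums $W_k=\sum_{j\le k} w_j$, so that $W_0=0$ and $W_{2^m}=1$. Since $q\in(0,1)$, the index $k^*=\min\set{k : W_k \ge q}$ exists and satisfies $k^*\ge 1$. Set $g(z^{(j)})=1$ if $j\le k^*$ and $g(z^{(j)})=0$ otherwise. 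Then $\Pr[g=1]=W_{k^*}\ge q$. By minimality, $W_{k^*-1}<q$, so
\[ \Pr[g=1] = W_{k^*-1}+w_{k^*} < q + \max(p,1-p)^m. \]

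For the final bound, write $\max(p,1-p) = 1-\min(p,1-p) \le 1-a$, because $a\le \min(p,1-p)$. Hence $\max(p,1-p)^m\le (1-a)^m\le e^{-am}\le 2^{-am}$.

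There is no real obstacle here. The only points to check are the inequality direction in $e^{-am}\le 2^{-am}$, which holds since $e>2$, and that $a$ bounds $\min(p,1-p)$ from below, which is immediate from its definition.
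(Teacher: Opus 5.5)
Your proof is correct and follows the same approach as the paper: accept the shortest prefix of an enumeration of $\set{0,1}^m$ whose $\mu_{m,p}$-mass reaches $q$, and bound the overshoot by the largest point mass, $\max(p,1-p)^m \leq (1-a)^m \leq 2^{-am}$. You also fill in the details the paper leaves implicit: the existence and minimality of the cutoff index, and the inequality chain $1-a \leq e^{-a} \leq 2^{-a}$.
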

\begin{proof}
	List all elements $z\in \set{0,1}^m$ in an arbitrary order.
	Define $g$ to be the indicator function of the smallest prefix of elements with measure at least $q$.
	The prefix measure can overshoot $q$ by an amount that is at most the measure of the final added element, which is at most $(1-a)^m \leq 2^{-am}$ under the distribution $\mu_{m,p}$.
\end{proof}

Let $n\in \N$ and let $Q:\set{0,1}^n\to [0,1]$ be a quantum query algorithm making $T$ queries.
Let $g:\set{0,1}^m \to \set{0,1}$ be a function as in the above lemma, and let \[r := \Pr_{z\sim \mu_{m,p}}\sqbrac{g(z)=1} \in \sqbrac{q, q+2^{-am}}.\]
With this, we show the following quantum-to-classical simulation:

\begin{lemma}\label{lemma:bias_classical_alg}
	For every $\epsilon\in (0,1)$, there exists a deterministic classical query algorithm $A$ making at most $S(2mT, \epsilon)$ queries, and such that
	\[ \E_{Y\sim \mu_{n,r}}\sqbrac{\abs{A(Y)-Q(Y)}} \leq \epsilon. \]
\end{lemma}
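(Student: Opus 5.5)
We compose $Q$ with the gadget $g$ of Lemma~\ref{lemma:inner_fn}, apply the assumed $(\pbias,S)$-simulation to the composed algorithm on $p$-biased inputs, and then pull the resulting classical algorithm back to $r$-biased inputs using Lemma~\ref{lemma:derand_dt}.

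First, define $Q'\colon\set{0,1}^{mn}\to[0,1]$ by
\[
Q'(z_1,\dots,z_n)=Q\brac{g(z_1),\dots,g(z_n)},
\]
where each $z_i\in\set{0,1}^m$ is a block. The algorithm $Q'$ is a quantum query algorithm making at most $2mT$ queries. Each query of $Q$ to bit $i$ is implemented by computing $g(z_i)$ into an ancilla: query all $m$ bits of block $i$ (in superposition over $i$, controlled on the index register), compute $g$, and XOR the result into the answer register. Then uncompute by querying the $m$ bits again. This costs $m$ queries to compute and $m$ to uncompute, so $2m$ per original query, and uncomputation makes the simulation exact. If $Z\sim\mu_{mn,p}$, then the blocks $Z_i$ are i.i.d.\ $\mu_{m,p}$, so $(g(Z_1),\dots,g(Z_n))\sim\mu_{n,r}$ by the definition of $r$.

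Second, apply the hypothesis to $Q'$ with input length $mn$ and error $\epsilon$. This gives a deterministic classical algorithm $D\colon\set{0,1}^{mn}\to[0,1]$ making at most $S(2mT,\epsilon)$ queries with
\[
\E_{Z\sim\mu_{mn,p}}\sqbrac{\abs{D(Z)-Q'(Z)}}\le\epsilon.
\]

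Third, apply Lemma~\ref{lemma:derand_dt} with $X_i=g(Z_i)$, blocks $Z_i$, and $f=Q$. The pairs $(X_i,Z_i)$ are mutually independent since they are functions of independent blocks, and $Q(X)=Q'(Z)$ pointwise. The lemma yields a deterministic $T'$-query algorithm $A$ on $\set{0,1}^n$, with $T'=S(2mT,\epsilon)$, satisfying
\[
\E_{X}\sqbrac{\abs{A(X)-Q(X)}}\le\E_{X,Z}\sqbrac{\abs{D(Z)-Q(X)}}=\E_Z\sqbrac{\abs{D(Z)-Q'(Z)}}\le\epsilon.
\]
Since $X\sim\mu_{n,r}$, this is exactly the claim. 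One detail: the lemma's construction samples $W_{i,b}$ from $Z_i$ conditioned on $X_i=b$, which is well defined when $0<r<1$. This holds for $m$ large, and in any case the lemma handles probability-zero events.

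The main obstacle is only the first step: checking that the gadget composition costs at most $2m$ queries per query of $Q$ and is an exact implementation. The standard compute/uncompute argument handles this. Everything else follows directly from the earlier lemmas.
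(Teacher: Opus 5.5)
Your proposal is correct and follows the paper's proof essentially step for step. You lift $Q$ through the $n$-fold gadget at $2m$ queries per original query via compute/uncompute, apply the $(\pbias,S)$-simulation at input length $mn$, and pull back to $\mu_{n,r}$ with Lemma~\ref{lemma:derand_dt}, using that the pairs $(g(Z_i),Z_i)$ are mutually independent.
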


\begin{proof}

Define $G: \set{0,1}^{mn}\to \set{0,1}^n$ to be the $n$-fold product of the function $g$, and $\tilde{Q}:\set{0,1}^{nm}\to [0,1]$ to be the corresponding lifted quantum query algorithm, given by:
\[ G(z) = \brac{g(z_{1}),g(z_{2}),\dots,g(z_{n})}, \quad  \tilde{Q}(z) = Q(G(z)), \]
where $z = (z_1,z_2,\dots,z_n)$ with each $z_{i}\in \set{0,1}^m$.
Observe that $\tilde{Q}$ makes at most $2mT$ queries to its input, since one query to the $i$\textsuperscript{th} bit of $G(z)$, given by $g(z_i)$, can be simulated by $2m$ queries to $z$: this is done by querying all the bits of $z_{i}$, computing $g(z_{i})$, and then uncomputing the queried block and all auxiliary workspace.

By the assumed simulation property over the $p$-biased distribution, we get a deterministic classical query algorithm $D:\set{0,1}^{mn}\to [0,1]$ making at most $S(2mT, \epsilon)$ queries, and such that 
\[ \E_{Z\sim \mu_{mn, p}}\sqbrac{\abs{D(Z)-Q(G(Z))}} = \E_{Z\sim \mu_{mn, p}}\sqbrac{|D(Z)-\tilde{Q}(Z)|} \leq \epsilon.\]
Observe that when $Z\sim \mu_{mn, p}$, it holds that $Y := G(Z) \sim \mu_{n,r}$.
Hence, the above implies
\[\E_{Y,Z}\sqbrac{\abs{D(Z)-Q(Y)}} \leq \epsilon.\]
Finally, since the pairs $(Y_1,Z_1), \dots, (Y_n,Z_n)$ are mutually independent, derandomization (Lemma~\ref{lemma:derand_dt}) implies the existence of a classical query algorithm $A$ as desired.
\end{proof}

We can now prove the main result of this subsection:

\begin{proof}[Proof of Proposition~\ref{prop:equiv_prod}]
	Consider any $\epsilon\in (0,1)$, and choose $m = \left\lceil\frac{2}{a}\log_2\brac{\frac{8T}{\epsilon \sqrt{a}}}\right\rceil,\ \epsilon'=\epsilon/2$.
	Let $A$ be the deterministic classical algorithm as in Lemma~\ref{lemma:bias_classical_alg}, making at most $S(2mT, \epsilon')$ queries, and such that $\E_{Y\sim \mu_{n,r}}\sqbrac{\abs{A(Y)-Q(Y)}} \leq \epsilon'$.
	
	Define a coupling $(X,Y)$ of random variables, with $X\sim \mu_{n,q}$ and $Y\sim \mu_{n,r}$ as follows:
	For each $i\in [n]$ independently, let $U_i \in [0,1]$ be a uniform random variable, and let $X_i = \ind[U_i\leq q]$ and $Y_i = \ind[U_i\leq r]$.
	This coupling satisfies that for each $i\in [n]$, almost surely\footref{footnote:as}
	\[ \Pr[X_i \not= Y_i \mid X] \leq \frac{\abs{q-r}}{\min(q,1-q)} \leq \frac{2^{-am}}{a} .\]
	
	By the hybrid argument (Lemma~\ref{lemma:hyb}) applied to $Q$, and Lemma~\ref{lemma:bias_classical_alg}, we get
	\[ \E\sqbrac{\abs{A(Y)-Q(X)}} \leq \E\sqbrac{\abs{A(Y)-Q(Y)}}  + \E\sqbrac{\abs{Q(X)-Q(Y)}} \leq \epsilon/2 + 4T \cdot \frac{2^{-am/2}}{\sqrt{a}} \leq \epsilon . \]
	Finally, applying Lemma~\ref{lemma:derand_dt} (since the pairs $(X_1,Y_1), \dots, (X_n,Y_n)$ are mutually independent), we get an $S(2mT, \epsilon')$-query deterministic classical query algorithm $B$ such that
	\[ \E_{X\sim \mu_{n,q}}\sqbrac{\abs{B(X)-Q(X)}} \leq \epsilon . \qedhere \]
\end{proof}


\subsection{Equivalence of Simulation over Biased Product Measure and the Slice}

In this section, we show that the families of distributions $\pbias$ and $\pslice$ are equivalent with respect to classical simulation of quantum query algorithms with polynomial overhead.
The main idea is to show that quantum query algorithms have similar expectations over the two distributions.
Towards this, we first show a coupling between the two distributions:

\begin{lemma}[Slice and Biased Product Coupling]\label{lemma:slice_prod_coup}
Let $p\in (0,1)$, let $a = \min(p,1-p) \in (0,1/2]$, and let $n\in \N$.
Then, there exists a coupling $(X,Y)$ of random variables on $\set{0,1}^n$ such that $X\sim \nu_{n,p}$ and $Y\sim \mu_{n,p}$, and such that for every $i\in [n]$, almost surely\footref{footnote:as}
\[ \Pr[Y_i\not=X_i \mid X] \leq \frac{4}{\sqrt{an}}. \]
\end{lemma}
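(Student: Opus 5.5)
We construct the coupling by first sampling the Hamming weight and then adjusting a uniformly random slice string. Let $k=\lfloor pn\rfloor$ and draw $X\sim\nu_{n,p}$. Independently draw $K\sim\textnormal{Binomial}(n,p)$, which is the Hamming-weight distribution of $\mu_{n,p}$. Given $X$ and $K$, form $Y$ as follows. If $K\ge k$, choose uniformly at random $K-k$ of the $n-k$ zero-coordinates of $X$ and flip them to $1$. If $K<k$, choose uniformly at random $k-K$ of the $k$ one-coordinates of $X$ and flip them to $0$.

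\textbf{Marginal of $Y$.} Conditioned on $K$, the string $Y$ is uniform on the slice of weight $K$. This holds because $X$ is uniform on its slice, and adding a uniformly random subset of zeros, or removing a uniformly random subset of ones, is a symmetric operation. By permutation invariance, the resulting distribution is invariant under coordinate permutations and supported on weight $K$, hence uniform there. Since $\mu_{n,p}$ conditioned on weight $K$ is exactly uniform on that slice, we get $Y\sim\mu_{n,p}$.

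\textbf{The per-coordinate bound.} Fix $X=x$ and a coordinate $i$. By symmetry of the construction:
\begin{itemize}
\item if $x_i=0$, then $\Pr[Y_i\neq x_i\mid X=x]=\E\sqbrac{\frac{(K-k)^+}{n-k}}$;
\item if $x_i=1$, then $\Pr[Y_i\neq x_i\mid X=x]=\E\sqbrac{\frac{(k-K)^+}{k}}$.
\end{itemize}
Both quantities are at most $\E|K-k|/\min(k,n-k)$. We bound the numerator using
\[
\E|K-k|\le \E|K-pn|+1\le\sqrt{np(1-p)}+1\le \sqrt{an}+1,
\]
by Cauchy--Schwarz together with $|pn-k|<1$ and $p(1-p)\le a$. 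For the denominator, $\min(k,n-k)\ge an-1$.

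\textbf{The main obstacle} is the bookkeeping at small $n$, including the degenerate case $k=0$ when $pn<1$. The trick is to note that the claimed bound is trivial whenever $4/\sqrt{an}\ge 1$, that is, whenever $an\le 16$. So we may assume $an>16$. Then $\sqrt{an}+1\le \tfrac54\sqrt{an}$ and $an-1\ge \tfrac{15}{16}an$. This gives
\[
\frac{\E|K-k|}{\min(k,n-k)}\le \frac{\tfrac54\sqrt{an}}{\tfrac{15}{16}an}=\frac{4}{3\sqrt{an}}\le \frac{4}{\sqrt{an}},
\]
which is the claimed bound. In this regime we also have $k\ge 1$ and $n-k\ge 1$, so both denominators above are nonzero and the construction is well defined.
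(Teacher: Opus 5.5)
Your proposal is correct and follows essentially the same route as the paper. The paper uses the same coupling: draw an independent $\textnormal{Binomial}(n,p)$ weight and flip a uniformly random set of zero- or one-coordinates of $X$. It then uses the same per-coordinate formulas and the same estimates $\E|K-k|\le 1+\sqrt{an}$ and $\min(k,n-k)\ge an-1$ after discarding the trivial regime $an\le 16$; your final arithmetic differs only cosmetically and gives the slightly sharper constant $4/(3\sqrt{an})$.
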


\begin{proof}
	Suppose that $n \geq 16/a$, or else the lemma holds as the left-hand side is at most 1.

	Let $X \sim \nu_{n,p}$ be a uniformly random string of Hamming-weight $k= \lfloor pn \rfloor$.
	Conditioned on $X$, sample $Y$ as follows: Sample $W \sim \textnormal{Binomial}(n,p)$; if $W=k$, define $Y=X$; if $W > k$, choose $W-k$ coordinates uniformly at random from the 0-coordinates of $X$, flip them to 1, and let the resulting string be $Y$; if $W < k$, choose $k-W$ coordinates uniformly at random from the 1-coordinates of $X$, flip them to 0, and let the resulting string be $Y$.
	
	Observe that conditioned on $W=w$, the distribution of $Y$ is uniform over all strings of Hamming-weight $w$.
	Since $W \sim \textnormal{Binomial}(n,p)$, it holds that $Y\sim \mu_{n,p}$.
	
	Next, conditioned on $X$, we have the following for each $i\in [n]$: 
	
	\begin{enumerate}
		\item If $X_i=1$, then 
		\[\Pr[Y_i\not=X_i \mid X] = \E_W\sqbrac{\ind\sqbrac{W<k}\cdot \frac{k-W}{k}} \leq \frac{\E\abs{W-k}}{k} \leq \frac{\E\abs{W-k}}{\min(k, n-k)} .\]		
		\item If $X_i=0$, then \[\Pr[Y_i\not=X_i \mid X] = \E_W\sqbrac{\ind\sqbrac{W>k}\cdot \frac{W-k}{n-k}} \leq \frac{\E\abs{W-k}}{n-k}\leq \frac{\E\abs{W-k}}{\min(k, n-k)} .\]
	\end{enumerate}
	Also, we have that
	\[ \E\abs{W-k} \leq \abs{pn-k}+ \E\abs{W-pn} \leq 1 + \sqrt{\E\sqbrac{\brac{W-pn}^2}} \leq 1 + \sqrt{p(1-p)n} \leq 1+\sqrt{an}.\]
	Combining the above inequalities, we get
	\[ \Pr[Y_i\not=X_i \mid X] \leq \frac{\E\abs{W-k}}{\min(k, n-k)} \leq  \frac{\sqrt{an}+1}{an-1} \leq  \frac{2\sqrt{an}}{an/2}\leq \frac{4}{\sqrt{an}}. \qedhere\]
\end{proof}

Next, we prove the main result of this section:

\begin{proposition}\label{prop:equiv_prod_and_slice}
	Let $p\in (0,1)$, and let $a = \min(p,1-p)$.
	\begin{enumerate}
		\item Suppose that quantum query algorithms can be $(\pbias, S)$-simulated by classical query algorithms.
			Then, quantum query algorithms can be $(\pslice, S')$-simulated by classical query algorithms, with \[ S'(T, \epsilon) \leq 10^{7}\cdot a^{-1}\epsilon^{-4}\cdot \max\set{S(T,\epsilon/2)^2, T^4}.\]
			
		\item Suppose that quantum query algorithms can be $(\pslice, S)$-simulated by classical query algorithms.
			Then, quantum query algorithms can be $(\pbias, S')$-simulated by classical query algorithms, with \[ S'(T, \epsilon) \leq 10^{7}\cdot a^{-1}\epsilon^{-4}\cdot \max\set{S(T,\epsilon/2)^2, T^4}.\]
	\end{enumerate}
		
	In particular, quantum query algorithms over $\pbias$ can be polynomially classically simulated if and only if quantum query algorithms over $\pslice$ can be polynomially classically simulated.
\end{proposition}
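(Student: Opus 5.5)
The plan is to use the coupling of Lemma~\ref{lemma:slice_prod_coup} together with the hybrid argument (Lemma~\ref{lemma:hyb}). This transfers a classical simulator from one distribution to the other essentially unchanged, once $n$ is large enough that $\nu_{n,p}$ and $\mu_{n,p}$ look coordinate-wise almost identical. Small $n$ is handled trivially: a classical algorithm may query all $n$ input bits and then output $Q(x)$ exactly, which costs only $n$ queries. Accordingly, fix $T,\epsilon$, a $T$-query quantum algorithm $Q$ on $n$ bits, and set $S_0 = S(T,\epsilon/2)$ and a threshold
\[ N := 10^{7}\cdot a^{-1}\epsilon^{-4}\cdot \max\set{S_0^2, T^4}. \]
If $n\le N$, we use the trivial all-query algorithm, which has zero error and at most $N$ queries. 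So the remaining work is for $n > N$.

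For $n>N$, let $(X,Y)$ be the coupling of Lemma~\ref{lemma:slice_prod_coup}, with $X\sim\nu_{n,p}$, $Y\sim\mu_{n,p}$ and $\Pr[Y_i\neq X_i\mid X]\le \gamma := 4/\sqrt{an}$ almost surely. In part 1, the assumed simulation over $\pbias$ yields a deterministic $D$ with at most $S_0$ queries and $\E|D(Y)-Q(Y)|\le\epsilon/2$. We use the same $D$ on the slice and write
\[ \E\abs{D(X)-Q(X)} \le \E\abs{D(X)-D(Y)} + \E\abs{D(Y)-Q(Y)} + \E\abs{Q(Y)-Q(X)} \le S_0\gamma + \frac{\epsilon}{2} + 4T\sqrt{\gamma}, \]
applying Lemma~\ref{lemma:hyb}, part 2, to $D$ and part 1 to $Q$. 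Part 2 of the proposition is the identical triangle inequality with the roles of the distributions swapped: now $D$ comes from the $\pslice$ simulator, with $\E|D(X)-Q(X)|\le\epsilon/2$, and we bound $\E|D(Y)-Q(Y)|$.

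One point needs care here. Lemma~\ref{lemma:hyb} requires the flip probabilities to be small conditioned on the first variable, and our coupling only gives this conditioned on $X$, the slice variable. This is harmless, because the quantities $\E|D(X)-D(Y)|$ and $\E|Q(X)-Q(Y)|$ are symmetric in $X$ and $Y$. So in both directions we apply the lemma with $X$ in the conditioning role, and no derandomization step is needed, since $D$ is already deterministic.

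The remaining step is arithmetic: for $n>N$, show $S_0\gamma\le \epsilon/4$ and $4T\sqrt{\gamma}\le\epsilon/4$. The first holds once $n \ge 256\,S_0^2/(a\epsilon^2)$. The second holds once $n\ge 16\cdot 16^4\, T^4/(a\epsilon^4)$, which is about $10^6\, T^4/(a\epsilon^4)$. Both are implied by $n>N$ with the constant $10^7$ and $\epsilon<1$. The main obstacle is exactly this quantum term: the hybrid bound scales as $\sqrt{\gamma}\sim n^{-1/4}$, which forces the $\epsilon^{-4}T^4$ in the threshold. This is why the overhead is polynomial but not linear, and why the final polynomial-equivalence statement follows immediately.
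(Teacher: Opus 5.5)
Your proposal is correct and follows the paper's proof essentially step for step. The paper likewise uses the trivial all-query algorithm below a threshold of order $a^{-1}\epsilon^{-4}\max\{S(T,\epsilon/2)^2,T^4\}$, and above it reuses the same simulator $D$, transferring it via the coupling of Lemma~\ref{lemma:slice_prod_coup} and the hybrid argument (Lemma~\ref{lemma:hyb}) applied to both $D$ and $Q$. Your remark that the conditioning direction is harmless, by symmetry of $\E|D(X)-D(Y)|$ and $\E|Q(X)-Q(Y)|$, is correct and makes explicit a point the paper leaves implicit in part~2.
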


\begin{proof}
We start with the first part.
Let $p\in (0,1)$, and let $a = \min(p,1-p) \in (0,1/2]$.
Suppose that quantum query algorithms can be $(\pbias, S)$-simulated by classical query algorithms.
Let $n\in \N,\ \epsilon\in (0,1)$, and let $Q:\set{0,1}^n\to [0,1]$ be a quantum query algorithm making at most $T$ queries.

Let $N:= \left\lceil\frac{2^{20}}{a\epsilon^4}\cdot \max\set{S(T,\epsilon/2)^2, T^4}\right\rceil$.
If $n < N$, a classical query algorithm $D$ can query all bits of the input $x\in \set{0,1}^n$, and output $Q(x)$. This makes at most $N$ queries and has zero error.

Now, suppose that $n \geq N$.
By our assumption, there exists a deterministic classical query algorithm $D$, making at most $S(T, \epsilon/2)$ queries, and such that 
\[ \E_{Y\sim \mu_{n,p}}\sqbrac{\abs{D(Y)-Q(Y)}} \leq  \epsilon/2.\]
We will show that the same classical query algorithm $D$, making at most $S(T, \epsilon/2)$ queries, works over the slice $\nu_{n,p}$.
Using Lemma~\ref{lemma:slice_prod_coup}, we have a coupling $(X,Y)$ with $X\sim\nu_{n,p}$ and $Y\sim \mu_{n,p}$, and such that for every $i\in [n]$, $\Pr[Y_i\not=X_i \mid X] \leq \frac{4}{\sqrt{an}}$.
Then, by the hybrid argument (Lemma~\ref{lemma:hyb}), we have
\begin{align*}
	\E\sqbrac{\abs{D(X)-Q(X)}} &\leq \E\sqbrac{\abs{D(Y)-Q(Y)}} + \E\sqbrac{\abs{D(X)-D(Y)}} + \E\sqbrac{\abs{Q(X)-Q(Y)}}  
	\\&\leq \frac{\epsilon}{2} + S(T, \epsilon/2)\cdot \frac{4}{\sqrt{an}} + 4T\cdot \frac{2}{(an)^{1/4}} \leq \frac{\epsilon}{2}+\frac{\epsilon}{4}+\frac{\epsilon}{4}= \epsilon.
\end{align*}

The proof of the second part of this proposition is identical: we use the same coupling, and transfer the slice simulator from $X$ to $Y$.
\end{proof}

The main theorem of this section now follows:
\begin{proof}[Proof of Theorem~\ref{thm:equiv_prod_slice}]
	This follows from Proposition~\ref{prop:equiv_prod} and Proposition~\ref{prop:equiv_prod_and_slice}.	
\end{proof}


\section{Influences in Bounded Low-Degree Polynomials}\label{sec:aa}

Aaronson and Ambainis conjectured that every bounded low-degree polynomial on the Boolean hypercube has an influential coordinate with respect to the uniform measure~\cite{AA14}.
They showed that, assuming this conjecture, quantum query algorithms over $\halfbias$ can be polynomially classically simulated.

In this section, we formulate a version of their conjecture over general distributions, and show equivalences among these conjectures for natural distributions.
Then, we show how the equivalences lead to polynomial classical simulations over the distributions.

\begin{definition}[Distributional Aaronson--Ambainis]\label{defn:distributional_aa}
	Let $\blambda = (\lambda_n)_{n\in \N}$ be a family of distributions, where $\lambda_n$ is a distribution over $\set{0,1}^n$, and let $S:(0,1/4]\times \N \to [0,1]$.
	We say that $\AA(\blambda, S)$ holds if the following is true:
	For every $n,d\in \N$, $v\in (0,1/4]$, and every bounded degree-$d$ polynomial $f:\set{0,1}^n\to [0,1]$ with $\Var\up{\lambda_n}[f]\geq v$, there exists a coordinate $i\in [n]$ with
	\[ \Inf_i\up{\lambda_n}[f] \geq S(v, d). \]
	Note  the function $S$ does not depend on the input length $n$ directly.
	
	We say that $\AA(\blambda)$ holds if $\AA(\blambda, S)$ holds with some polynomially large function $S$, i.e., there exist constants $c, C>0$ such that $S(v, d) \geq c\cdot \brac{\frac{v}{d}}^C$ for all $v\in (0,1/4]$ and $d\in \N$.
\end{definition}

Thus the original Aaronson--Ambainis conjecture is:

\begin{conjecture}[{\cite[Conjecture 1.7]{AA14}}]
	$\AA(\halfbias)$ holds.
\end{conjecture}
 
The main result of this section is the following:

\begin{theorem}\label{thm:aa_equiv_prod_slice}
	For any $p,q\in (0,1)$, we have $\AA\brac{\pbias} \iff \AA\brac{\qslice}$.
	In particular, the families $\pbias, \qbias, \pslice, \qslice$ are all equivalent with respect to the $\AA$ criterion.
\end{theorem}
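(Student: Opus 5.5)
Following Section~\ref{sec:simul_equiv}, I will prove two equivalences: $\AA(\pbias)\iff\AA(\qbias)$ and $\AA(\pbias)\iff\AA(\pslice)$. The first comes from the gadget of Lemma~\ref{lemma:inner_fn}, the second from the coupling of Lemma~\ref{lemma:slice_prod_coup}. Throughout I will use Lemma~\ref{lemma:maxinf_lb} to dispose of small $n$. If $n\le N$ for some $N=\poly(d/v)$, then some coordinate already has influence at least $v/(2N^2)$, which is polynomial in $v/d$. So I may always assume $n$ is large.

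\textbf{Product to slice (and back).} Let $f$ be bounded of degree $d$. Take the coupling $(X,Y)$ with $X\sim\nu_{n,p}$, $Y\sim\mu_{n,p}$ and $\Pr[X_i\neq Y_i\mid X]\le \gamma:=4/\sqrt{an}$. I will compare variances and influences under the two measures.

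For any bounded degree-$d$ polynomial $h$, the union-bound calculation of Lemma~\ref{lemma:hyb} should give $\E|h(X)-h(Y)|\le 2d^2\gamma$ or similar. The idea is to walk from $X$ to $Y$ one coordinate at a time. The total change is bounded by $\sum_i \Pr[X_i\ne Y_i\mid X]\cdot|\partial_i h|$, and Lemma~\ref{lemma:lowdeg_flip} bounds $\sum_i|\partial_i h(x)|\le 4d^2$ pointwise. The catch is that $\partial_i h$ is evaluated at intermediate hybrids rather than at $X$. I would handle this with the flipping-polynomial trick: define $g(t)=\E[h(X(t))]$ along a path and apply Markov brothers as in Lemma~\ref{lemma:lowdeg_flip}. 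If that does not work, a cruder route suffices: note that $\E|h(X)-h(Y)|\le \E[\#\{i:X_i\neq Y_i\}]$ is too weak, but an $\ell_1$-type bound in terms of $d^2\gamma$ is what I expect the direct argument to give. Applying this to $h=f$ and $h=f^2$ gives $|\Var^{\nu}[f]-\Var^{\mu}[f]|\le O(d^2\gamma)$.

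For influences, I will apply the same closeness to the degree-$2d$ bounded function $h_i=(\partial_i f)^2$. This gives $|\Inf_i^{\nu}-\Inf_i^{\mu}|\le O(d^2\gamma)$ uniformly in $i$. If that per-coordinate transfer fails, I would fall back on Lemma~\ref{lemma:hyb}-style bounds plus the total-influence bound of Lemma~\ref{lemma:tot_inf_bound}.

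With these estimates in hand, choose $n\ge\poly(d/v)$ so that the error $O(d^2\gamma)$ is below $\min(v/2,\,S(v/2,d)/2)$. Then $\Var^\mu[f]\ge v/2$, so $\AA(\pbias)$ gives some $i$ with $\Inf_i^\mu[f]\ge S(v/2,d)$, and hence $\Inf_i^\nu[f]\ge S(v/2,d)/2$. The reverse direction is symmetric.

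\textbf{Bias $p$ to bias $q$.} Assume $\AA(\pbias)$ and let $f$ be bounded of degree $d$ with $\Var^{\mu_q}[f]\ge v$. Form $F=f\circ G$ with $G=g^{n}$, where $g$ is the gadget on $m=O(\frac1a\log(d/v))$ bits from Lemma~\ref{lemma:inner_fn}. Then $F$ is bounded of degree $dm$, since each $g$ is a function of $m$ bits and so has degree at most $m$. Under $\mu_{mn,p}$ we have $G(Z)\sim\mu_{n,r}$, so $\Var^{\mu_p}[F]=\Var^{\mu_r}[f]$. This is at least $v/2$, because $r$ is $2^{-am}$-close to $q$ and the variance is a polynomial in the bias of degree at most $2d$ whose derivative is controlled by Corollary~\ref{corr:markov_brothers}.

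$\AA(\pbias)$ then yields a coordinate $(i,j)$ of block $i$ with $\Inf_{(i,j)}^{\mu_p}[F]\ge S(v/2,dm)$. Flipping $z_{i,j}$ changes $F$ only if it flips $g(z_i)$, and in that case the change is exactly $|\partial_i f(G(z))|$. Since $z_i$ is independent of the other blocks, $\Inf_{(i,j)}[F]=\Pr[g \text{ flips}]\cdot\Inf_i^{\mu_r}[f]\le\Inf_i^{\mu_r}[f]$. Finally, transfer back from $\mu_r$ to $\mu_q$ by the same bias-continuity argument; the error $2^{-am}\poly(d)$ is absorbed by the choice of $m$.

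The main obstacle I expect is the per-coordinate influence transfer in the slice coupling. The errors must be uniform over $i$, of size $\poly(d)/\sqrt n$, and not summed over $n$ coordinates. The key input for this is the pointwise $\ell_1$ bound on derivatives from Lemma~\ref{lemma:lowdeg_flip}.
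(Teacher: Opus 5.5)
\textbf{Bias-to-bias part and overall structure.} This part matches the paper's argument: the gadget of Lemma~\ref{lemma:inner_fn}, Markov-brothers continuity of variance and influence in the bias, and the factorization $\Inf_{(i,j)}^{\mu_{nm,p}}[F]=\Inf_j^{\mu_{m,p}}[g]\cdot\Inf_i^{\mu_{n,r}}[f]$. The skeleton of your slice argument is also the right one: dispose of small $n$ by Lemma~\ref{lemma:maxinf_lb}, then transfer the expectations of $f$, $f^2$ and $(f(x)-f(x^{\oplus i}))^2$.

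\textbf{The gap.} The slice part rests on one estimate: for every bounded degree-$d$ polynomial $h$, the measures $\nu_{n,p}$ and $\mu_{n,p}$ give values of $h$ within $\poly(d)/\sqrt{n}$. You assert this (``should give $2d^2\gamma$ or similar'') but do not prove it, and the tools you cite do not give it.
\begin{itemize}
\item The union-bound computation in Lemma~\ref{lemma:hyb} is about $T$-query decision trees, and part 1 is about quantum algorithms. A general bounded low-degree polynomial has no such shallow-query structure; approximating it by a shallow decision tree is essentially what the conjecture is about.
\item Your heuristic $\sum_i\Pr[X_i\neq Y_i\mid X]\,\abs{\partial_i h}$ breaks exactly where you flag it. The marginal bound $\Pr[X_i\neq Y_i\mid X]\le\gamma$ says nothing about the derivatives at the intermediate hybrids.
\item The Markov-brothers path is never specified, and your fallback is, as you note, too weak.
\end{itemize}
Conversely, the per-coordinate influence transfer you name as the main obstacle is not one. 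Once the comparison holds for all bounded low-degree $h$, applying it to $h_i=(f(x)-f(x^{\oplus i}))^2$ gives errors uniform in $i$, as you yourself observe.

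\textbf{The missing idea.} Use the symmetry of the two distributions, not just the marginals of the coupling. The paper compares adjacent slices. Let $F(t)$ be the average of $h$ on the weight-$t$ slice. If $X$ is uniform of weight $t$ and a uniformly random $1$-coordinate is flipped, the result is uniform of weight $t-1$. So Lemma~\ref{lemma:lowdeg_flip} gives
\[
\abs{F(t)-F(t-1)}\le \frac1t\max_x\abs{\sum_{i:x_i=1}(h(x^{\oplus i})-h(x))}\le \frac{2d^2}{t},
\]
and symmetrically $\abs{F(t)-F(t-1)}\le 2d^2/(n-t+1)$, hence $\abs{F(t)-F(t-1)}\le 4d^2/n$. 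Averaging over $W\sim\mathrm{Binomial}(n,p)$ then gives $\abs{\E_{\mu_{n,p}}[h]-\E_{\nu_{n,p}}[h]}\le \frac{4d^2}{n}\E\abs{W-k}\le 8d^2/\sqrt n$ (Lemmas~\ref{lemma:lowdeg_two_slice} and~\ref{lemma:lowdeg_slice_vs_biased}). Only closeness of expectations is needed, not the stronger bound on $\E\abs{h(X)-h(Y)}$ you aim for.

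\textbf{Repairing your coupling instead.} If you prefer to keep the coupling, perform the $\abs{W-k}$ flips one at a time in random order. Then, conditioned on the current hybrid $z$, the next flipped coordinate is uniform among the zeros (or ones) of $z$. By the pointwise bound $\sum_{i\in A}\abs{h(z)-h(z^{\oplus i})}\le 4d^2$, each step costs at most $4d^2$ divided by the number of available coordinates. For typical $W$ this sums to $O(d^2\abs{W-k}/(an))$, which yields $\E\abs{h(X)-h(Y)}=O(d^2/\sqrt{an})$.
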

\begin{remark}
	We note that the overheads in the transferred polynomial lower bounds in the above theorem depend polynomially on $\min(p,1-p,q,1-q)$.
\end{remark}

\begin{remark}\label{remark:p_dep_on_n}
    As in Remark~\ref{remark:general_dist}, the conclusion of the above theorem holds with more general families $\blambda = (\lambda_n)_{n\in \N}$, where for each $n\in \N$, the distribution $\lambda_n$ is either a slice distribution, or a product distribution whose coordinate biases are uniformly bounded away from 0 and 1.
\end{remark}

We prove the above theorem in Section~\ref{sec:aa1} and Section~\ref{sec:aa2}; the arguments parallel the proofs in Section~\ref{sec:simul_equiv}.
Then, in Section~\ref{sec:simulation}, we show applications to polynomial classical simulation.


\subsection{Aaronson--Ambainis Equivalence over Product Measures with Different Biases}\label{sec:aa1}

In this section, we prove that the families $\pbias$ and $\qbias$ are equivalent with respect to the $\AA$ criterion.
The proof reuses the gadget-composition argument in Section~\ref{sec:gadget_comp}.

We first prove that variance and influence vary continuously with the bias of the distribution:

\begin{lemma}\label{lemma:varinf_cont}
	Let $f:\set{0,1}^n\to [0,1]$ be a degree-$d$ polynomial, and let $q,r\in [0,1]$.
	Then, for every $i\in [n]$, it holds that 
	\[ \abs{\Var\up{\mu_{n,q}}[f]-\Var\up{\mu_{n,r}}[f]} \leq 8d^2\cdot \abs{q-r}, \quad
	\abs{\Inf_i\up{\mu_{n,q}}[f]-\Inf_i\up{\mu_{n,r}}[f]} \leq 8d^2\cdot \abs{q-r} .\]
\end{lemma}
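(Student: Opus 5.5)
Both quantities are expectations under $\mu_{n,t}$ of functions built from $f$, so I would view each as a function of the bias $t\in[0,1]$ and bound its derivative using Markov brothers' inequality (Corollary~\ref{corr:markov_brothers}), in the same spirit as Lemma~\ref{lemma:lowdeg_flip}. The bound $\abs{\phi(q)-\phi(r)}\le \sup_{t\in[0,1]}\abs{\phi'(t)}\cdot\abs{q-r}$ then gives both claims.

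\textbf{Variance.} Write $\Var\up{\mu_{n,t}}[f] = g_2(t) - g_1(t)^2$, where $g_1(t)=\E_{X\sim\mu_{n,t}}[f(X)]$ and $g_2(t)=\E_{X\sim\mu_{n,t}}[f(X)^2]$. By multilinearity, $g_1(t)=\sum_S a_S t^{|S|}$ is a univariate polynomial of degree at most $d$ with values in $[0,1]$ on $[0,1]$. Corollary~\ref{corr:markov_brothers} gives $\abs{g_1'}\le 2d^2$ there, so $\abs{(g_1^2)'}=2\abs{g_1}\abs{g_1'}\le 4d^2$. For $g_2$, the function $f^2$ has degree up to $2d$, and applying Markov directly would give the worse constant $8d^2$. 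Instead I would note that $\frac{d}{dt}\E_{\mu_{n,t}}[h] = \sum_i \E[\partial_i h]$ for multilinear $h$ (derivative in direction $e_i$), and compute $\partial_i(f^2)(x)=f(x^{i\to1})^2-f(x^{i\to0})^2=(f(x^{i\to1})+f(x^{i\to0}))\,\partial_i f(x)$. This is harder to control termwise. A cleaner route is to simply accept degree $2d$: $g_2$ is a degree-$\le 2d$ polynomial bounded in $[0,1]$, so $\abs{g_2'}\le 2(2d)^2=8d^2$. That alone exceeds the budget once combined with the $4d^2$ term.

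To hit $8d^2$ overall, I would use the identity $\Var = \tfrac12\E_{X,Y}[(f(X)-f(Y))^2]$ with $X,Y$ independent $\mu_{n,t}$. This expresses the variance as the expectation of the $[0,1]$-valued function $\tfrac12(f(x)-f(y))^2\cdot 2\in[0,1]$ on $\{0,1\}^{2n}$ under $\mu_{2n,t}$, which is multilinear of degree $\le 2d$. Hence $\phi(t)=\E_{\mu_{2n,t}}[(f(x)-f(y))^2]$ is a polynomial of degree $\le 2d$ bounded in $[0,1]$, and Corollary~\ref{corr:markov_brothers} gives $\abs{\phi'}\le 8d^2$. Halving gives $\abs{\tfrac{d}{dt}\Var}\le 4d^2\le 8d^2$.

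\textbf{Influence.} The influence is $\Inf_i\up{\mu_{n,t}}[f]=\E_{\mu_{n,t}}[(\partial_i f)^2]$. Here $\partial_i f$ is multilinear of degree $\le d-1$, does not depend on $x_i$, and takes values in $[-1,1]$, so $(\partial_i f)^2$ is multilinear of degree $\le 2d-2$ with values in $[0,1]$. Its $\mu_{n,t}$-expectation is therefore a univariate polynomial in $t$ of degree $\le 2d$, bounded in $[0,1]$ on $[0,1]$. Corollary~\ref{corr:markov_brothers} gives derivative at most $2(2d)^2=8d^2$, and the mean value theorem finishes.

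The only subtle point is confirming that $t\mapsto\E_{\mu_{n,t}}[h]$ is a polynomial of degree $\le\deg h$ when $h$ is the multilinearization of a product. This holds because on $\{0,1\}$ we have $x_i^2=x_i$, so the product reduces multilinearly without raising degree, and each monomial $x^S$ has expectation $t^{|S|}$.
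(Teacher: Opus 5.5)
Your proof is correct and matches the paper's: both $t\mapsto \Var\up{\mu_{n,t}}[f]$ and $t\mapsto \Inf_i\up{\mu_{n,t}}[f]$ are univariate polynomials of degree at most $2d$ that are bounded on $[0,1]$, so Corollary~\ref{corr:markov_brothers} together with the mean value theorem gives the bound. The two-copy detour for the variance is unnecessary: $V(t)=g_2(t)-g_1(t)^2$ is itself a single polynomial of degree at most $2d$ with values in $[0,1/4]$, which even gives $\abs{V'}\le 2d^2$, and your $12d^2$ overshoot came only from bounding $g_2$ and $g_1^2$ separately.
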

\begin{proof}
	Fix some $i\in [n]$, and define
	\[ V(t) = \Var\up{\mu_{n,t}}[f], \qquad I_i(t) = \Inf_i\up{\mu_{n,t}}[f], \qquad t\in [0,1].\]
	Since the expectation of a multilinear monomial $x^S$ under $\mu_{n,t}$ is $t^{|S|}$, both $ V(t)$ and $I_i(t)$ are univariate polynomials in $t$, of degree at most $2d$.
	Moreover, for each $t\in [0,1]$, we have $0\leq V(t) \leq 1/4$ and $0\leq I_i(t) \leq 1$.
	The result now follows by the Markov brothers' inequality (Corollary~\ref{corr:markov_brothers}) and the mean-value theorem.
\end{proof}

Next, we prove a reduction from $\pbias$ to $\qbias$:

\begin{proposition}\label{prop:equiv_prod_aa}
	Let $p,q\in (0,1)$ and let $a = \min(p,1-p,q,1-q)$.
	Suppose that $\AA(\pbias, S)$ holds.
	Then, for every $D\in \N$, $\AA(\qbias, S_D)$ holds with 
	\[ S_D(v, d) \geq S\brac{v/2,\ d \cdot m_D(v,d)} - (v/d)^D ,\]
	where $m_D(v,d) =  \lceil \frac{3D}{a}\log_2(\frac{d}{v})\rceil$.
\end{proposition}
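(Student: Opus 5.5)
Fix $D$, $d$, $v\in(0,1/4]$, and a bounded degree-$d$ polynomial $f:\set{0,1}^n\to[0,1]$ with $\Var\up{\mu_{n,q}}[f]\geq v$. Set $m=m_D(v,d)$ and take $g:\set{0,1}^m\to\set{0,1}$ from Lemma~\ref{lemma:inner_fn}, so that $r:=\Pr_{z\sim\mu_{m,p}}[g(z)=1]\in[q,q+2^{-am}]$. We may assume $S(v/2,dm)>(v/d)^D$, since otherwise the bound is vacuous. The plan mirrors Section~\ref{sec:gadget_comp}. Form $F=f\circ G:\set{0,1}^{mn}\to[0,1]$, where $G=g^{\otimes n}$. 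Each $g(z_i)$ is a Boolean function of $m$ bits, so it is a multilinear polynomial of degree at most $m$. Substituting it into the multilinear $f$ and multilinearizing gives a bounded polynomial of degree at most $dm$ (blocks are disjoint, so products stay multilinear across blocks). Since $G(Z)\sim\mu_{n,r}$ for $Z\sim\mu_{mn,p}$, we get $\Var\up{\mu_{mn,p}}[F]=\Var\up{\mu_{n,r}}[f]$.

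First I would transfer the variance from $q$ to $r$. Lemma~\ref{lemma:varinf_cont} gives $\Var\up{\mu_{n,r}}[f]\geq v-8d^2 2^{-am}$. With $m\geq\frac{3D}{a}\log_2(d/v)$ we have $2^{-am}\leq (v/d)^{3D}$, so $8d^2(v/d)^{3D}\leq v/2$ as long as, say, $D\geq1$. Constant-factor slack may force a slightly different constant in $m_D$, but the intended exponent $3D$ looks designed for this. Hence $\Var\up{\mu_{mn,p}}[F]\geq v/2$, and $\AA(\pbias,S)$ yields a coordinate $(i,j)$, block $i$ and position $j$, with $\Inf_{(i,j)}\up{\mu_{mn,p}}[F]\geq S(v/2,dm)$.

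Next I would bound the influence of this inner coordinate by the influence of $f$'s coordinate $i$ under $\mu_{n,r}$. Flipping $z_{i,j}$ changes $F$ only if it changes $g(z_i)$, and in that case $|F(z)-F(z^{\oplus(i,j)})|=|\partial_i f(G(z))|$. The value $\partial_i f(G(z))$ does not depend on block $i$, and the other blocks $G(z)_{-i}$ are distributed as $\mu_{n-1,r}$ independently of $z_i$. Therefore
\[ \Inf_{(i,j)}\up{\mu_{mn,p}}[F]=\Pr[g\text{ sensitive at }j]\cdot\E_{Y\sim\mu_{n,r}}\sqbrac{(\partial_if(Y))^2}\leq \Inf_i\up{\mu_{n,r}}[f]. \]
Finally, Lemma~\ref{lemma:varinf_cont} again gives $\Inf_i\up{\mu_{n,q}}[f]\geq \Inf_i\up{\mu_{n,r}}[f]-8d^2 2^{-am}\geq S(v/2,dm)-(v/d)^D$, using $8d^2(v/d)^{3D}\leq (v/d)^D$ whenever $v/d$ is small enough (note $v\leq1/4$ and $d\geq1$). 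This is the claimed $S_D$.

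I expect the main obstacle to be the bookkeeping that the choice of $m_D$ makes both error terms small: we need $8d^2(v/d)^{3D}\leq\min(v/2,(v/d)^D)$ uniformly in $d\geq1$ and $v\leq1/4$. This reduces to $8d^2(v/d)^{2D}\leq1$ and $16 d^2 v^{3D-1}d^{-3D}\leq1$, which hold for $D\geq1$ except possibly in edge cases such as $D=1$, $d=1$. If needed, I would handle those edge cases by enlarging the constant. The degree bound on $F$ and the influence identity are routine.
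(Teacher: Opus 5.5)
Your proposal is correct and follows the paper's proof essentially step for step. You compose $f$ with the gadget $g$ from Lemma~\ref{lemma:inner_fn}, use $\Var\up{\mu_{mn,p}}[f\circ G]=\Var\up{\mu_{n,r}}[f]$ and the factorization $\Inf_{(i,j)}\up{\mu_{mn,p}}[f\circ G]=\Inf_i\up{\mu_{n,r}}[f]\cdot\Inf_j\up{\mu_{m,p}}[g]\leq \Inf_i\up{\mu_{n,r}}[f]$, and move between $r$ and $q$ twice via Lemma~\ref{lemma:varinf_cont}.

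Your edge-case worry is unfounded, and no constant needs enlarging. Since $v/d\leq 1/4$ and $2D\geq 2$, we have $8d^2 2^{-am}\leq 8d^2(v/d)^{2D}(v/d)^D\leq 8d^2\cdot\frac{1}{16d^2}(v/d)^D=\frac12(v/d)^D\leq\min\set{v/2,(v/d)^D}$. This holds for all $D,d\geq 1$, including $D=d=1$.
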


\begin{proof}
Fix some $p,q\in (0,1)$, and let $a = \min(p,1-p,q,1-q) \in (0,1/2]$.
Suppose that $\AA(\pbias, S)$ holds.

Let $n\in \N$ and let $f:\set{0,1}^n\to [0,1]$ be a degree-$d$ polynomial with $\Var\up{\mu_{n,q}}[f] \geq v$.
Let $D\in \N$, let $m = m_D(v,d) =  \lceil \frac{3D}{a}\log_2(\frac{d}{v})\rceil \in \N$, and let $g:\set{0,1}^m \to \set{0,1}$ be a function as in Lemma~\ref{lemma:inner_fn}, with 
	\[r := \Pr_{z\sim \mu_{m,p}}\sqbrac{g(z)=1} \in \sqbrac{q, q+2^{-am}}.\]

Let $\delta = 8d^2 2^{-am}$.
By the definition of $m$, and using $v\leq 1/4$, we have
\[ \delta \leq 8d^2 \cdot (v/d)^{3D} \leq 8d^2 \cdot (1/(4d))^2\cdot (v/d)^D \leq 1/2\cdot (v/d)^D \leq \min\set{v/2, (v/d)^D}. \]
By Lemma~\ref{lemma:varinf_cont}, it holds that for every $i\in [n]$,
\begin{equation}\label{eq:varinf_cont}
	\abs{\Var\up{\mu_{n,q}}[f]-\Var\up{\mu_{n,r}}[f]} \leq \delta, \quad
	\abs{\Inf_i\up{\mu_{n,q}}[f]-\Inf_i\up{\mu_{n,r}}[f]} \leq \delta .
\end{equation}

Define $G: \set{0,1}^{mn}\to \set{0,1}^n$ to be the $n$-fold product of the function $g$, and $\tilde{f}:\set{0,1}^{nm}\to [0,1]$ to be the corresponding function composed with $f$, given by:
\[ G(z) = \brac{g(z_{1}),g(z_{2}),\dots,g(z_{n})}, \quad  \tilde{f}(z) = f(G(z)), \]
where $z = (z_1,z_2,\dots,z_n)$ with each $z_{i}\in \set{0,1}^m$.
The function $\tilde{f}$ satisfies:

\begin{enumerate}
	\item The degree of $\tilde{f}$ is at most $\deg(f)\cdot \deg(g) \leq dm$.
	\item $\Var\up{\mu_{nm,p}}[\tilde{f}] = \Var\up{\mu_{n,r}}[f]$. By Equation~\ref{eq:varinf_cont}, this is at least $v - \delta \geq v/2$.
	\item For any coordinate $(i,j)\in [n]\times [m]$, it holds by the chain rule that 
	\[ \partial_{ij} \tilde{f}(z) =  \partial_i f\brac{G(z)} \cdot \partial_j g(z_i). \]
	The first factor on the right-hand side depends only on blocks $z_\ell$ for $\ell\not=i$, and the second factor depends only on the block $z_i$. Hence, squaring and taking expectation under $\mu_{nm, p}$, we get
	\[ \Inf_{(i,j)}\up{\mu_{nm,p}}[\tilde{f}] = \Inf_i\up{\mu_{n,r}}[f] \cdot \Inf_j\up{\mu_{m,p}}[g] \leq  \Inf_i\up{\mu_{n,r}}[f] . \]
	Combining with Equation~\ref{eq:varinf_cont}, we get
	\[ \Inf_{(i,j)}\up{\mu_{nm,p}}[\tilde{f}] \leq \Inf_i\up{\mu_{n,q}}[f] +  \delta \leq  \Inf_i\up{\mu_{n,q}}[f] + (v/d)^D  .\]
\end{enumerate}

Now, by our assumption $\AA(\pbias, S)$, there exists $(i,j)\in [n]\times [m]$ such that \[ \Inf_{(i,j)}\up{\mu_{nm,p}}[\tilde{f}] \geq S(v/2,\ dm).\]
Using the above points, we get 
\[ \Inf_i\up{\mu_{n,q}}[f] \geq S(v/2, dm) - (v/d)^D . \qedhere \]
\end{proof}

Finally, by an appropriate choice of parameters, we prove the main result of this section:

\begin{corollary}\label{corr:aa_equiv_prod}
	For every $p,q\in (0,1)$, $\AA\brac{\pbias} \iff \AA\brac{\qbias}$.
\end{corollary}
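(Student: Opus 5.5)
By symmetry it suffices to prove $\AA(\pbias)\Rightarrow\AA(\qbias)$; the reverse direction is the same statement with $p$ and $q$ swapped. The plan is to feed the polynomial lower bound $S(v,d)\ge c(v/d)^C$ into Proposition~\ref{prop:equiv_prod_aa} with a suitable choice of $D$. The subtracted term $(v/d)^D$ must be at most half of the main term. Since $m_D(v,d)$ is only logarithmic in $d/v$, the main term stays polynomial.

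Concretely, $m=m_D(v,d)=\lceil \frac{3D}{a}\log_2(d/v)\rceil \le \frac{6D}{a}\log_2(d/v)$, since $d/v\ge 4$. Plugging into the assumed bound gives
\[ S\brac{v/2,\ dm} \ \ge\ c\brac{\frac{v}{2dm}}^C \ \ge\ c\brac{\frac{a v}{12 D\, d\log_2(d/v)}}^C. \]
Using $\log_2 x\le x$ for $x\ge 1$, this is at least $c\,(a/(12D))^C (v/d)^{2C}$.

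The choice is $D=2C+1$, possibly enlarged by a constant depending on $a$, $c$, $C$ so that
\[ (v/d)^D \le \tfrac12 c\,(a/(12D))^C (v/d)^{2C}. \]
This works because $v/d\le 1/4$: it suffices that $4^{-(D-2C)}\le \tfrac12 c(a/(12D))^C$. The left side decays exponentially in $D$ while the right side decays only polynomially, so a finite $D$ suffices. With this choice, $S_D(v,d)\ge \tfrac c2 (a/(12D))^C (v/d)^{2C}$, which is a polynomial lower bound of the required form, with constants depending on $a$ (polynomially through $a^C$, matching the remark). Hence $\AA(\qbias)$ holds.

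The main obstacle is just this parameter bookkeeping: $D$ must be fixed independently of $v$ and $d$, and the logarithmic factor from $m$ must be absorbed into a polynomial. Both points are handled by the crude bounds above. No further analytic input is needed beyond Proposition~\ref{prop:equiv_prod_aa}.
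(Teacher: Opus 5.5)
Your proposal is correct and follows the paper's proof essentially step for step: you apply Proposition~\ref{prop:equiv_prod_aa}, bound $m_D(v,d)\le \frac{6D}{a}\log_2(d/v)\le \frac{6D}{a}\cdot\frac{d}{v}$ to get a main term of at least $c\,(a/(12D))^C (v/d)^{2C}$, and then fix $D$ (independently of $v,d$) large enough that $4^{-(D-2C)}\le \frac12 c\,(a/(12D))^C$. This is exactly the paper's condition on $D$, and the reverse implication is likewise obtained there by swapping $p$ and $q$.
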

\begin{proof}
	Fix some $p,q\in (0,1)$, and let $a = \min(p,1-p,q,1-q) \in (0,1/2]$.
	Suppose that $\AA(\pbias, S)$ holds with $S(v, d) \geq c\cdot\brac{\frac{v}{d}}^C$ for some constants $c,C$.
	
	By Proposition~\ref{prop:equiv_prod_aa}, for every $D\in \N$, $\AA(\qbias, S_D)$ holds with 
	\[ S_D(v, d) \geq S\brac{v/2,\ d \cdot m_D(v,d)} - (v/d)^D ,\]
	where $m_D(v,d) =  \lceil \frac{3D}{a}\log_2(\frac{d}{v})\rceil$.
	
	Using the notation $x= v/d\in (0, 1/4]$, we have $m_D(v,d) \leq \frac{6D}{a}\log_2(1/x) \leq \frac{6D}{ax}$. Hence,
	\begin{align*}
		S_D(v,d) &\geq c \cdot \brac{\frac{v}{2d\cdot m_D(v,d)}}^C - \brac{\frac{v}{d}}^D
		\\&\geq c\cdot \brac{\frac{ax^2}{12D}}^C - x^D
		\\& \geq \brac{\frac{1}{2}\cdot \frac{c\cdot a^C}{(12D)^C}}\cdot x^{2C}.
	\end{align*}
	The last inequality holds for a large enough integer $D$, satisfying $\brac{\frac{1}{4}}^{D-2C}\cdot D^C \leq \brac{\frac{1}{2}\cdot \frac{c\cdot a^C}{12^C}}$.
	Note that the final influence bound depends polynomially on $a$.
	
	The reverse implication follows by interchanging $p$ and $q$.
\end{proof}


\subsection{Aaronson--Ambainis Equivalence over Biased Product Measure and the Slice}\label{sec:aa2}

In this section, we prove that the families $\pbias$ and $\pslice$ are equivalent with respect to the $\AA$ criterion.

As a first step, we show that bounded low-degree polynomials are close in expectation over the two distributions.
For this, we analyze how the expectation of a polynomial differs over two slices of the hypercube.
\begin{lemma}\label{lemma:lowdeg_two_slice}
	Let $f:\set{0,1}^n\to [0,1]$ be a bounded degree-$d$ polynomial. For each $t\in \set{0,1,2,\dots,n}$, let $F(t)$ denote the expectation of $f$ over the Hamming-weight $t$ slice.
	Then, for each $t\in [n]$, it holds that $\abs{F(t)-F(t-1)} \leq 4d^2/n.$
	
	In particular, for each $t,u \in  \set{0,1,2,\dots,n}$, it holds that 
	\[\abs{F(t)-F(u)} \leq \frac{4d^2}{n}\cdot \abs{t-u}.\]
\end{lemma}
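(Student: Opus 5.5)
We need |F(t)-F(t-1)| ≤ 4d^2/n. The plan is a double-counting step that rewrites the difference of slice averages as an average of single-bit flip differences, followed by an appeal to Lemma~\ref{lemma:lowdeg_flip}.

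\emph{Coupling between adjacent slices.} Let $X$ be uniform on the weight-$t$ slice, and let $i$ be a uniformly random coordinate among the $t$ one-coordinates of $X$. Then $X^{\oplus i}$ is uniform on the weight-$(t-1)$ slice. This holds because the pair $(X, X^{\oplus i})$ is a uniform random edge between the two slices, and each weight-$(t-1)$ string has the same number $n-t+1$ of upward neighbours. Hence
\[ F(t)-F(t-1) = \E_X\sqbrac{\frac{1}{t}\sum_{i: X_i=1}\brac{f(X)-f(X^{\oplus i})}}. \]
Applying Lemma~\ref{lemma:lowdeg_flip} with $A=\set{i: x_i=1}$ bounds the inner sum by $2d^2$ in absolute value, so $\abs{F(t)-F(t-1)}\le 2d^2/t$.

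\emph{Symmetric bound.} In the same way, sampling $Y$ uniformly on the weight-$(t-1)$ slice and flipping a uniformly random zero-coordinate gives
\[ \abs{F(t)-F(t-1)}\le \frac{2d^2}{n-t+1}. \]
Since $t+(n-t+1)=n+1$, one of $t$ and $n-t+1$ is at least $(n+1)/2\ge n/2$. Taking the better of the two bounds therefore gives $\abs{F(t)-F(t-1)}\le 4d^2/n$.

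\emph{Telescoping.} For general $t,u$, summing the per-step bounds along the $\abs{t-u}$ steps between them gives $\abs{F(t)-F(u)}\le \frac{4d^2}{n}\abs{t-u}$.

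The only point needing care is the uniformity claim in the coupling. Double counting the edges between the two slices verifies it, so no step is a real obstacle.
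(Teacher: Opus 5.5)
Your proposal is correct and is essentially the paper's proof. The paper also flips a random one-coordinate from the weight-$t$ slice and applies Lemma~\ref{lemma:lowdeg_flip} to get $2d^2/t$, symmetrizes via a zero-coordinate flip to get $2d^2/(n-t+1)$, combines the two as $2d^2/\max(t,n-t+1)\le 4d^2/n$, and then telescopes.
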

\begin{proof}
	Let $t\in [n]$.
	Observe that if we pick a uniformly random input $X\in \set{0,1}^n$ of Hamming-weight $t$, and flip a random coordinate $i$ with $X_i=1$, we obtain a uniformly random input of Hamming-weight $t-1$.
	Hence, by Lemma~\ref{lemma:lowdeg_flip}, we have
	\begin{align*}
		\abs{F(t-1)-F(t)} &= \abs{\E_{X:|X|=t}\sqbrac{\frac{1}{t} \sum_{i\in [n]:X_i=1} f(X^{\oplus i})} - \E_{X:|X|=t}\sqbrac{f(X)}}
		\\&\leq \frac{1}{t}\cdot \max_{x\in \set{0,1}^n} \abs{\sum_{i\in [n]:x_i=1} \brac{f(x^{\oplus i})-f(x)}} \leq \frac{2d^2}{t}.
	\end{align*}
	A similar argument, where we first pick a random input of Hamming-weight $t-1$, and then flip a 0-coordinate, shows that $\abs{F(t-1)-F(t)} \leq \frac{2d^2}{n-t+1}$.
	Combining the two inequalities, we get
	\[ \abs{F(t)-F(t-1)} \leq \frac{2d^2}{\max(n-t+1, t)} \leq \frac{4d^2}{n}. \qedhere\]
\end{proof}

\begin{lemma}\label{lemma:lowdeg_slice_vs_biased}
	Let $p\in (0,1)$ and $f:\set{0,1}^n\to [0,1]$ be a degree-$d$ polynomial. Then,
	\[ \abs{\E_{\mu_{n,p}}[f] - \E_{\nu_{n,p}}[f]} \leq \frac{8d^2}{\sqrt{n}}. \]
\end{lemma}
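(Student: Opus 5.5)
The plan is to write the biased product measure as a mixture of slices and apply Lemma~\ref{lemma:lowdeg_two_slice}. Conditioned on the Hamming weight $W$, a sample from $\mu_{n,p}$ is uniform on the slice of weight $W$, where $W\sim\textnormal{Binomial}(n,p)$. Hence
\[ \E_{\mu_{n,p}}[f] = \E_{W}\sqbrac{F(W)}, \qquad \E_{\nu_{n,p}}[f] = F(k), \]
with $k=\lfloor pn\rfloor$ and $F(t)$ the average of $f$ over the weight-$t$ slice, as in Lemma~\ref{lemma:lowdeg_two_slice}.

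Next, by the triangle inequality followed by the Lipschitz bound of Lemma~\ref{lemma:lowdeg_two_slice},
\[ \abs{\E_{\mu_{n,p}}[f]-\E_{\nu_{n,p}}[f]} \le \E_W\abs{F(W)-F(k)} \le \frac{4d^2}{n}\cdot \E\abs{W-k}. \]

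It then remains to bound $\E\abs{W-k}$, exactly as in the proof of Lemma~\ref{lemma:slice_prod_coup}. Using $\abs{pn-k}\le 1$ and Cauchy--Schwarz,
\[ \E\abs{W-k} \le 1+\sqrt{p(1-p)n} \le 1+\frac{\sqrt n}{2} \le 2\sqrt n \quad (n\ge 1). \]
Multiplying gives $\frac{4d^2}{n}\cdot 2\sqrt n = \frac{8d^2}{\sqrt n}$, which is the claimed bound.

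No step is a real obstacle. The only points to check are that the weight-conditioned product distribution is uniform on the slice, which is immediate by symmetry, and that the constants fit, which they do for every $n\ge1$ with room to spare.
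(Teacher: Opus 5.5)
Your proposal is correct and follows the paper's proof step for step. You write $\mu_{n,p}$ as a $\textnormal{Binomial}(n,p)$ mixture of uniform slice distributions, apply the Lipschitz bound of Lemma~\ref{lemma:lowdeg_two_slice}, and bound $\E\abs{W-k}\le 1+\sqrt{p(1-p)n}\le 2\sqrt{n}$ to obtain $\frac{8d^2}{\sqrt{n}}$.
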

\begin{proof}
	For each $t=0,1,2,\dots,n$, let $F(t)$ denote the expectation of $f$ over the Hamming-weight $t$ slice.
	Then, for $k=\lfloor pn \rfloor$, and $W\sim\mathrm{Binomial}(n,p)$, we have by Lemma~\ref{lemma:lowdeg_two_slice} that
	\begin{align*}
		\abs{\E_{\mu_{n,p}}[f] - \E_{\nu_{n,p}}[f]} &=   \abs{\E_W[F(W)]-F(k)} 
		\\&\leq \E_W\abs{F(W)-F(k)}
		\\&\leq \frac{4d^2}{n}\cdot \E\abs{W-k}.
	\end{align*}	
	The result follows by observing that
	\[ \E\abs{W-k} \leq \abs{pn-k}+ \E\abs{W-pn} \leq 1 + \sqrt{\E\sqbrac{\brac{W-pn}^2}} \leq 1 + \sqrt{p(1-p)n} \leq 2\sqrt{n}. \qedhere\]
\end{proof}

We apply the above lemma to conclude that variance and coordinate influences are close over the two distributions.

\begin{corollary}\label{corr:lowdeg_varinf_slice_biased}
	Let $p\in (0,1)$ and let $f:\set{0,1}^n\to [0,1]$ be a degree-$d$ polynomial. Then, for each $i\in [n]$,
	\[ \abs{\Var\up{\mu_{n,p}}[f] - \Var\up{\nu_{n,p}}[f] } \leq \frac{48d^2}{\sqrt{n}}, \qquad \abs{\Inf_i\up{\mu_{n,p}}[f] - \Inf_i\up{\nu_{n,p}}[f]} \leq \frac{48d^2}{\sqrt{n}}. \]
\end{corollary}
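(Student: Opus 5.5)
Both bounds follow by applying Lemma~\ref{lemma:lowdeg_slice_vs_biased} to suitable auxiliary polynomials and tracking their degrees and ranges.

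\textbf{Variance.} Write
\[
\Var\up{\lambda}[f]=\E_\lambda[f^2]-(\E_\lambda[f])^2 .
\]
The function $f^2$, multilinearized on the hypercube, has degree at most $2d$ and takes values in $[0,1]$. Lemma~\ref{lemma:lowdeg_slice_vs_biased} therefore gives
\[
\abs{\E_{\mu_{n,p}}[f^2]-\E_{\nu_{n,p}}[f^2]}\le \frac{8(2d)^2}{\sqrt n}=\frac{32d^2}{\sqrt n}.
\]
For the squared-mean term, use $\abs{\alpha^2-\beta^2}=\abs{\alpha-\beta}\,\abs{\alpha+\beta}\le 2\abs{\alpha-\beta}$, valid since $\alpha,\beta\in[0,1]$. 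Another application of Lemma~\ref{lemma:lowdeg_slice_vs_biased} bounds this term by $16d^2/\sqrt n$. Summing the two contributions gives $48d^2/\sqrt n$.

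\textbf{Influence.} Note that
\[
\Inf_i\up{\lambda}[f]=\E_{X\sim\lambda}\bigl[h(X)\bigr],\qquad h(x)=\brac{f(x)-f(x^{\oplus i})}^2=(\partial_i f(x))^2 .
\]
The polynomial $\partial_i f$ has degree at most $d-1$ and does not depend on $x_i$. Hence $h$ has degree at most $2d-2\le 2d$. Since $f$ takes values in $[0,1]$, we have $\abs{f(x)-f(x^{\oplus i})}\le 1$, so $h$ also takes values in $[0,1]$. This makes $h$ a bounded degree-$2d$ polynomial, and Lemma~\ref{lemma:lowdeg_slice_vs_biased} yields
\[
\abs{\Inf_i\up{\mu_{n,p}}[f]-\Inf_i\up{\nu_{n,p}}[f]}\le \frac{32d^2}{\sqrt n}\le \frac{48d^2}{\sqrt n}.
\]

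The only point requiring care is that the auxiliary functions $f^2$ and $(\partial_i f)^2$ genuinely satisfy the hypotheses of Lemma~\ref{lemma:lowdeg_slice_vs_biased}. The lemma needs the degree of the multilinear representation on the hypercube and values in $[0,1]$. Multilinearizing a product never increases degree, so the degree bounds of $2d$ hold. The range conditions were checked above. Both estimates therefore hold, and the constant $48$ covers both cases.
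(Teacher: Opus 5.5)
Your proof is correct and follows the paper's argument exactly. You apply Lemma~\ref{lemma:lowdeg_slice_vs_biased} to $f$ and $f^2$, using $\abs{\alpha^2-\beta^2}\le 2\abs{\alpha-\beta}$, for the variance, and to the bounded degree-$2d$ polynomial $(f(x)-f(x^{\oplus i}))^2$ for the influence, explicitly checking the degree and range hypotheses that the paper leaves implicit.
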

\begin{proof}
	Applying Lemma~\ref{lemma:lowdeg_slice_vs_biased} to both $f$ and $f^2$, we have
	\begin{align*}
		\abs{\Var\up{\mu_{n,p}}[f] - \Var\up{\nu_{n,p}}[f] } &\leq \abs{\E_{\mu_{n,p}}[f^2] - \E_{\nu_{n,p}}[f^2] }+ \abs{\E_{\mu_{n,p}}[f]^2 - \E_{\nu_{n,p}}[f]^2 }
		\\&\leq \abs{\E_{\mu_{n,p}}[f^2] - \E_{\nu_{n,p}}[f^2] }+ 2\abs{\E_{\mu_{n,p}}[f] - \E_{\nu_{n,p}}[f] }
		\\&\leq \frac{8(2d)^2}{\sqrt{n}} + 2\cdot \frac{8d^2}{\sqrt{n}} \leq \frac{48d^2}{\sqrt{n}}.
	\end{align*}
	The influence estimate is obtained by applying Lemma~\ref{lemma:lowdeg_slice_vs_biased} to the bounded degree-$2d$ polynomial $(f(x)-f(x^{\oplus i}))^2$.
\end{proof}

With this, we prove the main result of this section:

\begin{proposition}\label{prop:aa_equiv_prod_and_slice}
	Let $p\in (0,1)$.
	\begin{enumerate}
		\item Suppose that $\AA(\pbias, S)$ holds. Then, $\AA(\pslice, S')$ holds, with \[ S'(v,d) \geq 10^{-9}d^{-8}\cdot \min\set{v^5, v\cdot S(v/2, d)^4}.\]
		\item Suppose that $\AA(\pslice, S)$ holds. Then, $\AA(\pbias, S')$ holds, with \[ S'(v,d) \geq 10^{-9}d^{-8}\cdot \min\set{v^5, v\cdot S(v/2, d)^4}.\]	
	\end{enumerate}
	In particular, $\AA\brac{\pbias} \iff \AA\brac{\pslice}$.
\end{proposition}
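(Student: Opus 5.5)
Both parts follow from one argument with a case split on $n$. Given $f$ of degree $d$ with variance at least $v$ under the target distribution, we split according to whether $n$ is large enough for Corollary~\ref{corr:lowdeg_varinf_slice_biased} to transfer variance and influences with small loss. If it is not, Lemma~\ref{lemma:maxinf_lb} already gives a large enough influence. We describe part 1 (from $\pbias$ to $\pslice$); part 2 is symmetric, with the roles of $\mu_{n,p}$ and $\nu_{n,p}$ swapped, since Corollary~\ref{corr:lowdeg_varinf_slice_biased} and Lemma~\ref{lemma:maxinf_lb} are symmetric in the two distributions.

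Fix $f\colon\set{0,1}^n\to[0,1]$ of degree $d$ with $\Var\up{\nu_{n,p}}[f]\ge v$, and set $s = S(v/2,d)$. Define a threshold
\[ N = \left\lceil \frac{C d^4}{\min\set{v,s}^2}\right\rceil, \]
where the constant $C$ (around $10^4$) is chosen so that $48d^2/\sqrt{n}\le \min\set{v/2,\, s/2}$ whenever $n\ge N$.

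\emph{Large $n$ ($n\ge N$).} By Corollary~\ref{corr:lowdeg_varinf_slice_biased},
\[ \Var\up{\mu_{n,p}}[f]\ge v-\frac{48d^2}{\sqrt n}\ge \frac v2 . \]
Applying $\AA(\pbias,S)$, some coordinate $i$ has $\Inf_i\up{\mu_{n,p}}[f]\ge s$. Applying the corollary again,
\[ \Inf_i\up{\nu_{n,p}}[f]\ge s-\frac{48d^2}{\sqrt n}\ge \frac s2 . \]
One should check that $s/2$ dominates the claimed bound $10^{-9}d^{-8}\min\set{v^5, v s^4}$. This holds because $s\le 1$, $v\le 1/4$ and $d\ge 1$, so $v s^4 d^{-8}\le s$.

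\emph{Small $n$ ($n< N$).} Lemma~\ref{lemma:maxinf_lb} gives some $i$ with
\[ \Inf_i\up{\nu_{n,p}}[f]\ge \frac{v}{2n^2}\ge \frac{v}{2N^2}. \]
Since $N\lesssim C d^4/\min\set{v,s}^2$, we get $N^2\lesssim C^2 d^8/\min\set{v^4,s^4}$. Hence the influence is at least of order
\[ C^{-2}d^{-8}\, v\min\set{v^4,s^4} = C^{-2}d^{-8}\min\set{v^5, v s^4}, \]
which matches the stated form. The ceiling can be absorbed because $N\ge 1$ and $Cd^4/\min\set{v,s}^2\ge 1$, so $N\le 2Cd^4/\min\set{v,s}^2$.

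The polynomiality consequence is then immediate: if $S(v,d)\ge c(v/d)^{C'}$, the new bound is again polynomial in $v/d$. The main obstacle is bookkeeping. One must choose $C$ so that $(2C)^2$ stays below $10^{9}$ while still making $48d^2/\sqrt N\le \min\set{v,s}/2$, which requires $\sqrt{C}\ge 96$, i.e.\ $C\approx 10^4$ and $(2C)^2\approx 4\cdot 10^8$. This fits within the $10^{-9}$ constant (together with the factor $1/2$ from Lemma~\ref{lemma:maxinf_lb}), so the proof itself is routine, and only the constants need care.
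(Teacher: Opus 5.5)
Your proof is correct and follows the paper's argument essentially line for line: the same threshold $N=\lceil 10^4 d^4/\min\{v,S(v/2,d)\}^2\rceil$, Lemma~\ref{lemma:maxinf_lb} when $n<N$, and Corollary~\ref{corr:lowdeg_varinf_slice_biased} applied before and after $\AA(\pbias,S)$ when $n\ge N$, with part 2 by symmetry. One detail is missing: the degenerate case $S(v/2,d)=0$, where your $N$ is undefined but the claimed bound is $0$ and so holds trivially; the paper sets this case aside in the same way.
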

\begin{proof}
We start with the first part.
Let $p\in (0,1)$, and suppose that $\AA(\pbias, S)$ holds.
Let $n\in \N$ and let $f:\set{0,1}^n\to [0,1]$ be a degree-$d$ polynomial with $\Var\up{\nu_{n,p}}[f]\geq v$.
Assume that $S(v/2,d)>0$, or else there is nothing to prove.

Let $N = \lceil 10^4d^4/\min\set{v, S(v/2, d)}^2 \rceil$. If $n<N$, by Lemma~\ref{lemma:maxinf_lb} we know that there exists a coordinate $i\in [n]$ with influence at least $ v/(2n^2)\geq 10^{-9}d^{-8}\cdot \min\set{v^5, v\cdot S(v/2, d)^4}$.

Now, suppose that $n\geq N$. Then, $\delta := 48d^2/\sqrt{n} \leq 0.5\cdot \min\set{v,\ S(v/2, d)}$, and by Corollary~\ref{corr:lowdeg_varinf_slice_biased}, we have that $\Var\up{\mu_{n,p}}[f] \geq v - \delta \geq v/2$.
Hence, by our assumption $\AA(\pbias, S)$, there exists a coordinate $i\in [n]$ such that \[\Inf_i\up{\mu_{n,p}}[f] \geq S(v/2,d).\]
Then, by Corollary~\ref{corr:lowdeg_varinf_slice_biased}, we have
\[ \Inf_i\up{\nu_{n,p}}[f] \geq S(v/2,d) - \delta \geq 0.5\cdot S(v/2,d) \geq 10^{-9}d^{-8}\cdot \min\set{v^5, v\cdot S(v/2, d)^4}. \]

The proof of the second part of this proposition is identical.
\end{proof}

The main theorem of this section now follows:

\begin{proof}[Proof of Theorem~\ref{thm:aa_equiv_prod_slice}]
	This follows from Corollary~\ref{corr:aa_equiv_prod} and Proposition~\ref{prop:aa_equiv_prod_and_slice}.
\end{proof}


\subsection{Polynomial Classical Simulation under Aaronson--Ambainis}\label{sec:simulation}

As noted above, Aaronson and Ambainis showed polynomial simulation under their conjecture on influences in bounded low-degree polynomials.
Formally, they showed:
\begin{theorem}[{\cite[Theorem 3.3]{AA14}}]\label{thm:aa_paper_simulation}
	$\AA(\halfbias)$ implies that quantum query algorithms over $\halfbias$ can be polynomially classically simulated.
\end{theorem}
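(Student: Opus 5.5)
The statement is Theorem~\ref{thm:aa_paper_simulation}, the result of Aaronson and Ambainis~\cite{AA14}. I would reprove it by their standard ``query the most influential variable'' argument, adapted to the normalizations used in this paper.

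\textbf{Setup.} Let $Q$ be a $T$-query quantum algorithm. Its acceptance probability $f=Q$ is a bounded polynomial of degree $d\le 2T$~\cite{BBCMdW01}. The classical algorithm $D$ builds a decision tree adaptively. At a node with partial assignment $\rho$, it considers the restricted function $f_\rho$, which is again a bounded degree-$\le d$ polynomial on the remaining variables, with variance measured under the uniform distribution.
\begin{enumerate}
\item If $\Var\up{\mu_{1/2}}[f_\rho]<v$, where $v=\epsilon^2$, it stops and outputs $\E[f_\rho]$.
\item Otherwise, $\AA(\halfbias)$ gives a coordinate $i$ with $\Inf_i[f_\rho]\ge c(v/d)^C$. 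The algorithm queries $x_i$ and recurses.
\end{enumerate}
The tree is cut off at a depth $L$ chosen below.

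\textbf{Error analysis.} At a stopped leaf, Cauchy--Schwarz gives $\E|f_\rho-\E f_\rho|\le\sqrt v=\epsilon$. Averaging over leaves bounds the contribution of stopped leaves by $\epsilon$. At leaves cut off at depth $L$ the error is at most $1$, so it remains to show that a uniformly random input reaches depth $L$ with small probability.

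\textbf{Potential argument.} I would use the total influence as a potential, writing $I(f_\rho)=\sum_j\Inf_j[f_\rho]$. By Lemma~\ref{lemma:tot_inf_bound}, $I(f)\le 4d^2$ at the root. When we query $x_i$ at $\rho$, averaging over the uniformly random value of $x_i$, the expected remaining total influence drops by exactly $\Inf_i[f_\rho]$. This follows from the uniform-measure identity that, for $j\neq i$, $\Inf_j[f_\rho]$ equals the average over $x_i\in\set{0,1}$ of $\Inf_j[f_{\rho,x_i}]$. This identity is just linearity of expectation over the product measure.

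So every non-stopping step decreases the expected potential by at least $\eta=c(v/d)^C$, and the potential is always nonnegative. Take $\Phi_t$ to be the potential at depth $t$, with $\Phi_t=0$ after stopping. Then $\E[\Phi_t]\le 4d^2-\eta\cdot\E[\#\text{active steps up to } t]$. Hence the expected number of steps taken is at most $4d^2/\eta$. By Markov's inequality, $\Pr[\text{depth}\ge L]\le 4d^2/(\eta L)$. Choosing $L=4d^2/(\eta\epsilon)$ makes this at most $\epsilon$.

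The total error is then at most $2\epsilon$. The query count is
\[ L=\poly(T,1/\epsilon)\cdot (d/v)^C=\poly(T,1/\epsilon). \]

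\textbf{Main obstacle.} The step I would check most carefully is the exact potential decrease. The definition of influence here, $\E[(f(X)-f(X^{\oplus i}))^2]$, differs from the Fourier normalization by a constant. The argument only needs the averaging identity to hold, which it does for any normalization under the uniform product measure, since $\Inf_i$ of the restriction no longer counts $i$.

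The tree is allowed to be non-uniform and computationally unbounded, which is exactly what Definition~\ref{defn:qcsim} permits.
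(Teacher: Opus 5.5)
Your proposal is correct and follows the same route as the paper's proof sketch. When the variance of the restriction is below a $\poly(\epsilon)$ threshold, you output its mean. Otherwise you query the influential coordinate guaranteed by $\AA(\halfbias)$, and you bound the depth using the total-influence potential, which starts at most $4d^2$ by Lemma~\ref{lemma:tot_inf_bound}, together with Markov's inequality and truncation. Your averaging identity, showing that the expected total influence drops by exactly $\Inf_i[f_\rho]$ per query, is the step the paper leaves implicit, and it is valid for the paper's normalization of influence.
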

\begin{proof}[Proof Sketch]
	Suppose $\AA(\halfbias)$ holds and let $Q:\set{0,1}^n\to [0,1]$ be a quantum query algorithm making at most $T$ queries.
	Then, $Q$ is a multilinear polynomial of degree at most $2T$~\cite[Lemma 4.2]{BBCMdW01}.
	The classical query algorithm $D$ does the following: Let $\eta = \poly(\epsilon)$ be a parameter, and
	\begin{enumerate}
		\item If $Q$ has low variance, that is $\Var\up{\mu_{n,1/2}}[Q]\leq \eta$, then the classical query algorithm $D$ outputs the constant $\E_{X\sim \mu_{n,1/2}}[Q(X)]$.
			In this case, 
			\[\E\abs{Q(X)-D(X)} \leq \sqrt{\E\abs{Q(X)-D(X)}^2} = \sqrt{\Var\up{\mu_{n,1/2}}[Q]}\leq \sqrt{\eta}.\]
		\item Else, if $\Var\up{\mu_{n,1/2}}[Q]> \eta$, by the assumption $\AA(\halfbias)$, there exists a coordinate $i\in [n]$ with $\Inf_i\up{\mu_{n,1/2}}[Q] \geq \poly(\eta/T)$.
	In this case, $D$ queries coordinate $i$, and then repeats the above steps with the polynomial $Q$ restricted to the queried value of $X_i$.
	\end{enumerate}		
	Using the total influence as a potential function, along with Lemma~\ref{lemma:tot_inf_bound}, it can be shown that with high probability the process terminates after $\poly(T,1/\epsilon)$ steps.
	Then, truncating the decision tree at a suitably larger polynomial depth yields a tree with a worst case polynomial query bound, and error at most $\epsilon$.
\end{proof}

Using the equivalences proved in Theorem~\ref{thm:aa_equiv_prod_slice} and Theorem~\ref{thm:equiv_prod_slice}, similar results for the slice and biased-product distributions directly follow from the above result over $\halfbias$:

\begin{theorem}
	Let $\AA$ be used to denote either of the (equivalent) statements $\AA(\qbias)$ or $\AA(\qslice)$ for some $q\in (0,1)$.

	Then, $\AA$ implies that for every $p\in (0,1)$, quantum query algorithms over $\pbias$ and $\pslice$ can be polynomially classically simulated.
\end{theorem}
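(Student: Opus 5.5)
The plan is to chain the two equivalence theorems already proved with the Aaronson--Ambainis simulation result (Theorem~\ref{thm:aa_paper_simulation}) for the uniform distribution. No new analytic work should be needed; the argument is pure bookkeeping of implications. Throughout, the uniform distribution is the special case $\halfbias$ of the biased family, so it is covered by both equivalence theorems.

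\textbf{Step 1: reduce the hypothesis to the uniform case.} Suppose the hypothesis $\AA$ is given in either form.
\begin{enumerate}
\item If it is $\AA(\qslice)$, apply Theorem~\ref{thm:aa_equiv_prod_slice} with the pair $(1/2, q)$ to obtain $\AA(\halfbias)$.
\item If it is $\AA(\qbias)$, use Corollary~\ref{corr:aa_equiv_prod} with biases $q$ and $1/2$ to obtain $\AA(\halfbias)$.
\end{enumerate}
Both results preserve polynomial lower bounds. The constants in the transferred bound depend on $\min(q,1-q)$, but $q$ is fixed, so the function $S$ in the conclusion is still polynomial in $v/d$ and independent of $n$, as Definition~\ref{defn:distributional_aa} requires.

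\textbf{Step 2: simulate under the uniform distribution.} Apply Theorem~\ref{thm:aa_paper_simulation}. This shows that quantum query algorithms over $\halfbias$ can be polynomially classically simulated.

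\textbf{Step 3: transfer the simulation to $\pbias$ and $\pslice$.} Fix an arbitrary $p\in(0,1)$.
\begin{enumerate}
\item For the slice, Theorem~\ref{thm:equiv_prod_slice} with the pair $(1/2,p)$ shows that polynomial simulation over $\halfbias$ is equivalent to polynomial simulation over $\pslice$.
\item For the biased product distribution, Proposition~\ref{prop:equiv_prod} with $1/2$ in place of $p$ and $p$ in place of $q$ gives simulation over $\pbias$.
\end{enumerate}
I would check that the resulting overhead $S'$ is still $\poly(T,1/\epsilon)$. In Proposition~\ref{prop:equiv_prod} the blow-up is a factor of $T\log(T/\epsilon)/a$ inside $S$. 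In Proposition~\ref{prop:equiv_prod_and_slice} it is $a^{-1}\epsilon^{-4}\max\{S^2,T^4\}$. Both are polynomial, with constants depending only on the fixed $p$, so the simulation remains polynomial.

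There is no real obstacle here. The only point needing care is that each equivalence is stated for a fixed pair of biases, so the polynomial degree and constants may depend on $p$ and $q$. That is acceptable because the conclusion is quantified separately for each fixed $p$.
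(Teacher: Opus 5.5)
Your proposal is correct and matches the paper's own argument. The paper also transfers the hypothesis to $\AA(\halfbias)$ via Theorem~\ref{thm:aa_equiv_prod_slice} (equivalently, Corollary~\ref{corr:aa_equiv_prod} in the biased case), applies the Aaronson--Ambainis simulation result (Theorem~\ref{thm:aa_paper_simulation}) over $\halfbias$, and then transfers polynomial simulation to $\pbias$ and $\pslice$ using Theorem~\ref{thm:equiv_prod_slice}. Your overhead bookkeeping, which stays polynomial with constants depending only on the fixed $p$ and $q$, is consistent with the bounds in Propositions~\ref{prop:equiv_prod} and~\ref{prop:equiv_prod_and_slice}.
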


\begin{remark}
	The proof sketch of Theorem~\ref{thm:aa_paper_simulation} also works for the family $\pbias$ for any $p\in (0,1)$, assuming $\AA(\pbias)$ holds.
	This is because conditioning on the queried coordinates leaves the other coordinates independent and $p$-biased.
	
	A direct recursion on the slice family $\pslice$, while possible, is much more intricate if we only assume $\AA(\pslice)$.
	This is because conditioning a $k$-slice on the queried coordinates produces slices with different parameters.
	One can get around this by using Theorem~\ref{thm:aa_equiv_prod_slice} (see Remark~\ref{remark:p_dep_on_n}) if all the densities stay in some range $[a,1-a]$, and this can be maintained by truncating the recursion appropriately.
	Essentially, our equivalence theorems lead to a much cleaner proof.	
\end{remark}

\section*{AI Disclosure}
We used Google Gemini and ChatGPT to assist with the proofs in this paper.
Google Gemini suggested the core idea underlying the inner-gadget construction in Lemma~\ref{lemma:inner_fn}, and ChatGPT assisted in developing the remaining proofs and refining the exposition.
The tools materially affected all the technical results in this paper.
The authors verified the correctness and originality of all content including the references.

\bibliographystyle{alpha}
\bibliography{main.bib}

@article {AA14,
    AUTHOR = {Aaronson, Scott and Ambainis, Andris},
     TITLE = {The need for structure in quantum speedups},
   JOURNAL = {Theory Comput.},
  FJOURNAL = {Theory of Computing. An Open Access Journal},
    VOLUME = {10},
      YEAR = {2014},
     PAGES = {133--166}
}

@article {Fil16,
    AUTHOR = {Filmus, Yuval},
     TITLE = {An orthogonal basis for functions over a slice of the {B}oolean hypercube},
   JOURNAL = {Electron. J. Combin.},
  FJOURNAL = {Electronic Journal of Combinatorics},
    VOLUME = {23},
      YEAR = {2016},
    NUMBER = {1},
     PAGES = {Paper 1.23, 27}
}

@misc {Ham25,
    AUTHOR = {Hamoudi, Yassine},
     TITLE = {A Brief Introduction to Quantum Query Complexity},
      YEAR = {2025},
       NOTE = {Available at \url{https://arxiv.org/pdf/2508.08852}.}
}

@book {Don14,
    AUTHOR = {O'Donnell, Ryan},
     TITLE = {Analysis of {B}oolean functions},
 PUBLISHER = {Cambridge University Press, New York},
      YEAR = {2014}
}

@incollection {Sha04,
    AUTHOR = {Shadrin, Aleksei},
     TITLE = {Twelve proofs of the {M}arkov inequality},
 BOOKTITLE = {Approximation theory: a volume dedicated to {B}orislav {B}ojanov},
     PAGES = {233--298},
 PUBLISHER = {Prof. M. Drinov Acad. Publ. House, Sofia},
      YEAR = {2004}
}

@article {FM19,
    AUTHOR = {Filmus, Yuval and Mossel, Elchanan},
     TITLE = {Harmonicity and invariance on slices of the {B}oolean cube},
   JOURNAL = {Probab. Theory Related Fields},
  FJOURNAL = {Probability Theory and Related Fields},
    VOLUME = {175},
      YEAR = {2019},
    NUMBER = {3-4},
     PAGES = {721--782}, 
      NOTE = {(also in CCC 2016)}
}

@article {BBCMdW01,
    AUTHOR = {Beals, Robert and Buhrman, Harry and Cleve, Richard and Mosca, Michele and de Wolf, Ronald},
     TITLE = {Quantum lower bounds by polynomials},
   JOURNAL = {J. ACM},
  FJOURNAL = {Journal of the ACM},
    VOLUME = {48},
      YEAR = {2001},
    NUMBER = {4},
     PAGES = {778--797}
}

@article {DFKO07,
    AUTHOR = {Dinur, Irit and Friedgut, Ehud and Kindler, Guy and O'Donnell, Ryan},
     TITLE = {On the {F}ourier tails of bounded functions over the discrete cube},
   JOURNAL = {Israel J. Math.},
  FJOURNAL = {Israel Journal of Mathematics},
    VOLUME = {160},
      YEAR = {2007},
     PAGES = {389--412},
      NOTE = {(also in STOC 2006)}
}

@article {Sim97,
    AUTHOR = {Simon, Daniel R.},
     TITLE = {On the power of quantum computation},
   JOURNAL = {SIAM J. Comput.},
  FJOURNAL = {SIAM Journal on Computing},
    VOLUME = {26},
      YEAR = {1997},
    NUMBER = {5},
     PAGES = {1474--1483},
     NOTE = {(also in FOCS 1994)}
}

@article {YZ24,
    AUTHOR = {Yamakawa, Takashi and Zhandry, Mark},
     TITLE = {Verifiable quantum advantage without structure},
   JOURNAL = {J. ACM},
  FJOURNAL = {Journal of the ACM},
    VOLUME = {71},
      YEAR = {2024},
    NUMBER = {3},
     PAGES = {Art. 20, 50},
     NOTE = {(also in FOCS 2022)}
}

@inproceedings {LZ23,
    AUTHOR = {Lovett, Shachar and Zhang, Jiapeng},
     TITLE = {Fractional certificates for bounded functions},
 BOOKTITLE = {ITCS},
    VOLUME = {251},
     PAGES = {Art. No. 84, 13},
      YEAR = {2023},
}

@article {Mon12,
    AUTHOR = {Montanaro, Ashley},
     TITLE = {Some applications of hypercontractive inequalities in quantum information theory},
   JOURNAL = {J. Math. Phys.},
  FJOURNAL = {Journal of Mathematical Physics},
    VOLUME = {53},
      YEAR = {2012},
    NUMBER = {12},
     PAGES = {122206, 15}
}

@inproceedings {OZ16,
    AUTHOR = {O'Donnell, Ryan and Zhao, Yu},
     TITLE = {Polynomial bounds for decoupling, with applications},
 BOOKTITLE = {CCC},
    VOLUME = {50},
     PAGES = {Art. No. 24, 18},
      YEAR = {2016},
}

@inproceedings {BSdW22,
    AUTHOR = {Bansal, Nikhil and Sinha, Makrand and de Wolf, Ronald},
     TITLE = {Influence in completely bounded block-multilinear forms and classical simulation of quantum algorithms},
 BOOKTITLE = {CCC},
    VOLUME = {234},
     PAGES = {Art. No. 28, 21},
      YEAR = {2022}
}

@misc {LM26,
    AUTHOR = {Liu, Qipeng and Mutreja, Saachi},
     TITLE = {Parallel Quantum Advantage with Limited Adaptivity Requires Structure},
      YEAR = {2026},
       NOTE = {Available at \url{https://arxiv.org/abs/2608.20297v1}.}
}

@misc {BDST26,
    AUTHOR = {Blanc, Guy and Docter, Jordan and Strassle, Carmen and Tan, Li-Yang},
     TITLE = {Quantum Speedups Require Structure or Depth},
      YEAR = {2026},
       NOTE = {FOCS 2026 (Accepted). Available at \url{https://arxiv.org/abs/2608.19158}.}
}

@inproceedings {Bha25,
    AUTHOR = {Bhattacharya, Sreejata Kishor},
     TITLE = {Random restrictions of bounded low degree polynomials are juntas},
 BOOKTITLE = {ITCS},
    VOLUME = {325},
     PAGES = {Art. No. 17, 21},
      YEAR = {2025}
}

@article {BBBV97,
    AUTHOR = {Bennett, Charles H. and Bernstein, Ethan and Brassard, Gilles and Vazirani, Umesh},
     TITLE = {Strengths and weaknesses of quantum computing},
   JOURNAL = {SIAM J. Comput.},
  FJOURNAL = {SIAM Journal on Computing},
    VOLUME = {26},
      YEAR = {1997},
    NUMBER = {5},
     PAGES = {1510--1523}
}

@article {Kel12,
    AUTHOR = {Keller, Nathan},
     TITLE = {A simple reduction from a biased measure on the discrete cube
              to the uniform measure},
   JOURNAL = {European J. Combin.},
  FJOURNAL = {European Journal of Combinatorics},
    VOLUME = {33},
      YEAR = {2012},
    NUMBER = {8},
     PAGES = {1943--1957}
}

\end{document}